\DeclareSymbolFont{fouriersymbols}{FMS}{futm}{m}{n}
\DeclareSymbolFont{fourierlargesymbols}{FMX}{futm}{m}{n}
\DeclareMathDelimiter{\hsnorm}{\mathord}{fouriersymbols}{152}{fourierlargesymbols}{147}
\newcommand{\hs}{\big\hsnorm}
\newtheorem{theorem}{Theorem}
\newtheorem{lemma}[theorem]{Lemma}
\newcommand{\tr}{{\rm tr}}
\newcommand{\innprod}[2]{{\left\langle {#1},{#2}\right\rangle}}
\newcommand{\topt}[2]{\mathbf t_{#1}^{#2}}
\newcommand{\Tan}{\mathrm{Tan}}
\newcommand{\range}{\mathrm{range}}
\newcommand {\X}{\mathcal{H}}
\begin{document}

\begin{frontmatter}

\title{Transportation-Based Functional ANOVA and PCA for Covariance Operators}
\runtitle{Transportation-Based Covariance ANOVA and PCA}

\begin{aug}
  \author{\fnms{Valentina} \snm{Masarotto}\ead[label=e1]{v.masarotto@math.leideuniv.nl}}
\address{Mathematisch Instituut\\ Universiteit Leiden, The Netherlands\\ \printead{e1}}

\author{\fnms{Victor M.} \snm{Panaretos}\ead[label=e2]{victor.panaretos@epfl.ch}}
\and
 \author{\fnms{Yoav} \snm{Zemel} \ead[label=e3]{yoav.zemel@epfl.ch}}

\address{Institut de Math\'ematiques\\ \'Ecole Polytechnique F\'ed\'erale de Lausanne,  Suisse \\ \printead{e2,e3}}
  

\end{aug}

\begin{abstract}
We consider the problem of comparing several samples of stochastic processes with respect to their second-order structure, and describing the main modes of variation in this second order structure, if present. These tasks can be seen as an Analysis of Variance (ANOVA) and a Principal Component Analysis (PCA) of covariance operators, respectively. They arise naturally in functional data analysis, where several populations are to be contrasted relative to the nature of their dispersion around their means, rather than relative to their means themselves.  
We contribute a novel approach based on optimal (multi)transport, where each covariance can be identified with a a centred Gaussian process of corresponding covariance. By means of constructing the optimal simultaneous coupling of these Gaussian processes, we contrast the (linear) maps that achieve it with the identity with respect to a norm-induced distance. The resulting test statistic, calibrated by permutation, is seen to distinctly outperform the state-of-the-art, and to furnish considerable power even under local alternatives. This effect is seen to be genuinely functional, and is related to the potential for perfect discrimination in infinite dimensions.  In the event of a rejection of the null hypothesis stipulating equality, a geometric interpretation of the transport maps allows us to construct a (tangent space) PCA revealing the main modes of variation. As a necessary step to developing our methodology, we prove results on the existence and boundedness of optimal multitransport maps. These are of independent interest in the theory of transport of Gaussian processes. The transportation ANOVA and PCA are illustrated on a variety of simulated and real examples.
\end{abstract}

\end{frontmatter}

\tableofcontents

\section{Introduction}
\label{sec:intro}

Let $\{X_{i,1}\}_{i=1}^{n_1}, \dots ,\{X_{i,K}\}_{i=1}^{n_K}$ be $K$ independent samples of i.i.d. random elements in a separable Hilbert space $\mathcal{H}$, posessing well-defined means $\{\mu_j\}_{j=1}^K$ and covariances $\{\Sigma_j\}_{j=1}^{K}$. We consider the problem of testing the hypothesis 
\begin{equation}\label{null_hypothesis}
H_0: \Sigma_1=\Sigma_2=\hdots=\Sigma_K
\end{equation}
on the basis of the observations $\{X_{i,j}\}$ and, if $H_0$ is rejected, the subsequent problem of describing the main mode(s) of variation of the $K$ underlying covariances.

This problem arises very naturally in functional data analysis, i.e. when $\mathcal{H}$ is taken to be a function space (for instance $L^2[0,1]$ or a reproducing kernel Hilbert subspace thereof), and one is interested in discerning whether $K$ different groups of functions manifest the same type of dispersion relative to their mean. For instance, the functions could be curves representing DNA minicircles (\citet{panaretos2010second}, \citet{kraus2012dispersion}, and \citet{tavakoli:2014}), where different groups correspond to different base-pair sequences, and one is interested in probing for a dependence of the mechanical properties on the base pair sequence; or they could be surfaces representing the log spectrograms of short spoken words by different speakers (as in \citet{ferraty:2006}), and one may wish to see whether there is a difference in several groups of sounds; yet a further example may be in the analysis of age-dependent wheel-running
activity curves in mice, where one may wish to see whether the level of activity across age had evolved under several generation selections (\citet{cabassi2017permutation}).
What is common to all these examples is that it is not the mean structure that is suspected to differ (or at least to capture the most interesting differences); in that case, the problems would fall under the well studied topic of functional analysis of variance (see, e.g., \citet{benko:2009,zhang2013analysis, cuesta2010simple, gorecki2015comparison}). Rather it is the fluctuations around the means $\mu_j$, as encapsulated by the operators $\Sigma_j$, in what could be termed a functional covariance ANOVA. 

Early contributions in this direction focus on the two-sample case, as in \citet{benko2009common, panaretos2010second, fremdt2013testing}. In particular, \citet{benko2009common} propose a two–sample bootstrap based tests for some aspects of the spectrum of functional data, \citet{panaretos2010second} consider the problem in a two-sample setting with Gaussian processes, and \citet{fremdt2013testing} extend to non-Gaussian. \citet{kraus2012dispersion} provide resistant versions of two sample tests, focussed on operators related to the covariance. Two-sample testing has been first extended to the $K$-sample case by \citet{boente2018testing}, who propose a test based on the Hilbert--Schmidt distance between the estimated covariance operators of each population and where the critical values of the test statistics are calibrated via a bootstrap procedure.  The common theme in these papers is that the covariances are contrasted with respect to the Hilbert--Schmidt metric, which corresponds to imbedding covariance operators in a larger linear space, whereas they are not closed under linear operations. Instead, covariance operators are trace-class non-negative operators, so rather than being seen as Hilbert--Schmidt operators, they are better represented as ``squares" of such operators. For this reason, \citet{pigoli2014distances} considered the use of nonlinear metrics adapted to non-negative operators in the two-sample setting, generalising some of the work in \citet{dryden2009non} in finite dimensions. \citet{cabassi2017permutation} extend their metric-based methodology to the $K$-sample case. Other contributions for the $K$-sample comparison are from \citet{paparoditis2016bootstrap}, who develop an empirical bootstrap methodology and prove its consistency when the test statistics is based on the Hilbert-–Schmidt norm, and \citet{kashlak2019inference}, who perform $K$-sample comparison via concentration inequalities based methods. Recently, \citet{hlavka2022functional} proposed a method to perform functional ANOVA based on empirical characteristic functionals. When comparing with \citet{anderson2006distance, paparoditis2016bootstrap} and \citet{kashlak2019inference}, \citet{cabassi2017permutation} report simulation results illustrating state-of-the-art performance of their method. We found that this holds true even when comparing against the recent work of \citet{hlavka2022functional}. In \citet{hlavka2022functional}, a specific choice of the parameters covariance matrix yield similar conclusion to \citet{cabassi2017permutation} when comparing covariances in a real-data example.

\citet{pigoli2014distances} paid particular attention to the \emph{Procrustes} distance, which generalises a metric used to compare unlabelled shapes into a metric between covariance operators. Heuristically, the Procurstes metric aims to compare roots of two operators in the Hilbert-Schmidt distance, a natural choice since covariances are characterised as squares of Hilbert-Schmidt operators. However, there is an ambiguity as to which precise root one ought to use, and the Procrustes distance corrects for that by optimising over the square root orbits. Indeed, later work by \citet{pigoli2018statistical} reports that the Procrustes metric offered the most natural framework to compare trace-class operators, in that it uses a map from the space of covariance operators to the linear space of Hilbert-Schmidt operators. 
\citet{masarotto2018procrustes} carried out a deeper study of the Procrustes metric and established a fruitful connection with the Wasserstein metric between Gaussian processes. On the one hand, this allowed them to provide a complete geometrical description of the space of covariances under the Procrustes metric, including basic results about Fr\'echet means; on the other hand, it established intriguing parallels with the theory of optimal transportation, offering potentially new avenues and tools for the analysis of covariance operators (see also the discussion of \citet{pigoli2018statistical} by \citet{panaretos2018discussion}).

In this paper, we use precisely this novel transportation perspective to introduce a new ANOVA test, and then exploit the corresponding geometry to construct a PCA that respects the nature of the covariance operators, by representing covariance operators as transport maps. Specifically, we view the testing problem through the lens of optimally multicoupling the Gaussian processes $\{N(0,\Sigma_1),\ldots,N(0,\Sigma_K)\}$, thus translating the task of testing the hypothesis $H_0$ in Equation~\eqref{null_hypothesis} into that of testing whether the optimal multicoupling is ``trivial''.  To do this we first prove that the optimal multicoupling can always be \emph{deterministically}\footnote{Rather than stochastically, by means of a probability measure on $\mathcal{H}^K$.} produced by means of \emph{bounded linear} transport maps $\{\mathbf{t}_j\}$, and this regardless of the validity of the null hypothesis. These two results are of independent interest, and are stated as Theorem~\ref{thm:bounded_maps}. Then, given these results, we translate the task of testing the hypothesis in Equation~\eqref{null_hypothesis} into the equivalent task of testing the hypothesis
\begin{equation}\label{transport_null}
H_0':\mathbf{t}_1-\mathscr{I}=\hdots=\mathbf{t}_K-\mathscr{I}=0
\end{equation}
for $\mathscr{I}$ the identity, and so the hypothesis \eqref{transport_null} can now be tested by means of a norm-based (e.g., operator, Hilbert--Schmidt, or nuclear) test statistic, establishing a direct analogy with classical ANOVA. The $\Delta_j=\mathbf{t}_j-\mathscr{I}$ can heuristically also be viewed as ``roots'' of the original covariances, albeit free of any unitary ambiguity. Though the test is motivated by the 1-to-1 correspondence between covariances and Gaussian processes, it relies in no way on a Gaussian assumption, and can indeed admit an interpretation purely in terms of Procrustean geometry. Our simulation experiments indicate that the new test dominates state-of-the-art competitors, with dramatic gains in power, particularly against more challenging local alternatives. This is also explained by means of theory, and is seen to be a genuinely functional effect, with connection to the Hajek--Feldman condition. See  Section~\ref{sec:conclusions} for more details. 
In terms the computation, the quantity ($K^{-1}\sum_{j=1}^K\mathbf{t}_j - \mathscr{I}$) in finite dimension is precisely the negative gradient of the Fr\'echet (sum-of-squares) functional \citet[Theorem 1]{zemel2019frechet} and its computation is stable and feasible because it relies on the fast nature of the steepest descent algorithm in the space of covariances endowed with the Procrustes metric. 

When the mull hypothesis is rejected, it is natural as a second step to wish to describe the variation manifested by the covariances $\{\Sigma_j\}$, or indeed obtain a parsimonious representation thereof. We show how the maps $\{\mathbf{t}_j\}$ can then readily be employed to do just that, via a principal component analysis. In particular, the $\Delta_j=\mathbf{t}_j-\mathscr{I}$ can be interpreted as the logarithms of the $\{\Sigma_j\}$ at their Procrustes--Fr\'echet mean. The corresponding tangent space admits a Hilbertian structure with respect to a modified Hilbert--Schmidt inner product, which we use to produce a tangent space fPCA, and then retract the principal components back onto the covariance space. To the best of our knowledge, this is the first instance of a functional PCA on covariance operators that respects their intrinsic geometric features as trace-class positive operators. We describe the computational steps required to  do so in Section~\ref{sec:computation}, and illustrate the usefulness of the procedure on simulated data as well as a linguistic data set. The next paragraph collects the notational conventions employed throughout the paper. Proofs of our theoretical results are given in Section~\ref{sec:proofs}.

\section{Methodology}

\subsection{Basic Setting and Notation}

As stated in the introduction, we will be interested in exploring the variation in a finite collection of covariances $\{\Sigma_j\}_{j=1}^{K}$ on a real separable Hilbert space $\mathcal{H}$, equipped with the inner product $\langle\cdot,\cdot\rangle:\mathcal{H}\times\mathcal{H}\rightarrow\mathbb{R}$, and corresponding norm $\|\cdot\|:\mathcal{H}\to[0,\infty)$. Since $\mathcal{H}$ will in principle be infinite-dimensional, we will need to review some basic definitions and notation, which can be more subtle.

Given a bounded linear operator $A:\mathcal{H}\rightarrow\mathcal{H}$, we will denote its trace (when defined) by $\tr  A$ or $\tr( A)$, its adjoint by $A^*$, and its inverse by $A^{-1}$. The inverse may not be defined, or defined only on a (dense) subspace of $\X$. The \emph{range} of $A$ will be denoted by $\mathrm{range}(A)=\{Av:v\in\mathcal{H}\}$ whereas the \emph{kernel} of $A$ will be denoted by $\mathrm{ker}(A)=\{v\in\mathcal H:Av=0\}$.  We will say that a (possibly unbounded) operator $A$ is \emph{self-adjoint} if $\langle Au,v\rangle = \langle u,Av\rangle$ for all $u,v$ in the domain of definition of $A$;  if $A$ happens to also be bounded, then this is equivalent to the condition that $A=A^*$.  

A \emph{non-negative} operator is a self-adjoint, possibly unbounded operator $A$ such that $\langle Au,u\rangle\ge0$ for all $u$ in the domain of $A$.  If in addition $A$ is compact, then there exists a unique non-negative operator whose square equals $A$, which will be denoted by either $A^{1/2}$ or $\sqrt{A}$.  The inverse square root $(A^{1/2})^{-1}$ is denoted by $A^{-1/2}$.  For any bounded operator $A$, $A^*A$ is non-negative.  The identity operator on $\mathcal{H}$ will be denoted by $\mathscr{I}$.  The operator, Hilbert--Schmidt and trace (nuclear) norms will respectively be
\[
\hs A\hs_{\infty}=\sup_{\|h\|=1}\|Ah\|,
\quad \hs A \hs_2 =\sqrt{\tr\left(A^*A\right)},
\quad  \hs A \hs_1=\tr\left(\sqrt{A^*A}\right)
\] 
and can be ordered from coarser to finer as follows
\[
\hs A\hs_{\infty} \le
\hs A\hs_2 \le
\hs A\hs_{1}.
\]
When $\hs A\hs_{2}<\infty$ we say that $A$ is Hilbert--Schmidt and when $\hs A\hs_{1}<\infty$ we say that $A$ is \emph{nuclear} or \emph{trace-class}.

Summarising, in this setting, covariances are linear operators from $\mathcal{H}$ into $\mathcal{H}$, that are \emph{self-adjoint}, \emph{non-negative}, and \emph{trace-class}. As such, a covariance operator $\Sigma$ on $\mathcal{H}$ can be considered as the ``square" of a Hilbert--Schmidt operator: if $\hs B\hs_2<\infty$ then $B$ is certainly bounded, and $B^*B$ defines a valid covariance operator. 

It therefore becomes clear that covariance operators are non-linear objects, and though they can be contrasted by means of any of the three norms $\hs \cdot\hs_{\infty}$, $\hs \cdot\hs_2$, or $\hs \cdot\hs_1$, it may be preferable to find a means of comparison that respects this non-linear nature. In finite dimensions, this is done by means of some form of \emph{linearisation}, i.e. the use of a transformation that substitutes a covariance pair $(\Sigma_1,\Sigma_2)$ to be considered by an operator that can be contrasted to zero by means of one of the norms $\hs\cdot\hs_r$, $r=1,2,\infty$. For instance, in classical two-sampled covariance tests, two covariances $\Sigma_1$ and $\Sigma_2$ have been contrasted by means of quantities such as
$$\hs \Sigma_1\Sigma_2^{-1}-\mathscr{I}\hs_{r}\quad\&\quad   \hs 2\Sigma_2(\Sigma_1+\Sigma_2)^{-1}-\mathscr{I}\hs_{r}$$
assuming that the inverses exist (e.g., \citet{roy1953heuristic},  
\citet{kiefer1965admissible}, \citet{giri1968tests}. In non-Euclidean statistics, covariances $(\Sigma_1,\Sigma_2)$ have been contrasted by means of 
$$\hs \log(\Sigma_1)-\log(\Sigma_2)\hs_r\quad \&\quad \hs \log(\Sigma_2^{-1/2}\Sigma_1\Sigma_2^{-1/2})\hs_r,$$
again, assuming that the inverses exist\footnote{If more covariances $\{\Sigma_i\}$ are to be simultaneously compared, for instance in a covariance ANOVA, one could use the same contrasts, replacing $\Sigma_1$ with $\Sigma_i$ and $\Sigma_2$ by the arithmetic average $n^{-1}\sum_i \Sigma_i$.} (\citet{dryden2009non}).

In infinite dimensions, however, these criteria will generally fail to be well-defined. For example, the inverse of $\Sigma_2$ will be unbounded, and there is no guarantee that $\Sigma_1\Sigma_2^{-1}$ will be bounded, except if $\Sigma_1$ and $\Sigma_2$ share some special relation. Similarly, the logarithm of a covariance operator will typically be unbounded, and unless there is a specific relation between $\Sigma_1$ and $\Sigma_2$, the logarithmic criteria will fail to be well defined.
This is one of the main reasons why much of the literature on covariance operators has focussed on bypassing their nonlinear nature, and comparing them directly, e.g. by means of $\hs \Sigma_1-\Sigma_2\hs_2$ (\citet{panaretos2010second,fremdt2013testing,boente2018testing}).
 
A first step in obtaining linearisations that would yield contrasts respecting the nature of covariances, while being well-defined in infinite dimensions was made by \citet{pigoli2014distances} Since covariances are ``squares" of Hilbert--Schmidt operators, they considered contrasting the square roots of in the Hilbert--Schmidt distance
$$\hs \Sigma^{1/2}_1-\Sigma^{1/2}_2 \hs_2.$$
Observing that one could nevertheless choose roots other than the (unique) positive roots, by means of the fact that $\Sigma_2^{1/2}U(\Sigma_2U)^*=\Sigma_2$ they arrived at the \emph{Procrustes metric}
\[
\Pi(\Sigma_1,\Sigma_2)=\inf_{U^*U=\mathscr{I}}\hs \Sigma^{1/2}_1-\Sigma^{1/2}_2U \hs_2
\]
which lifts the unitary ambiguity by optimising over unitary matrices, and is well defined in both finite and infinite dimensions. Indeed they use this metric to develop a two-sample test for covariance comparison. \citet{masarotto2018procrustes} further developed several key properties of this metric and its geometry, interpreting it via the optimal transportation of Gaussian processes as the $L^2$-Wasserstein distance between two Gaussian measures $N(0,\Sigma_1)$ and $N(0,\Sigma_2)$ on $\mathcal{H}$,
\[
\Pi(\Sigma_1,\Sigma_2)=\inf_{X_i\sim N(0,\Sigma_i)}\mathbb{E}\|X_1-X_2\|^2_2=\hs \Sigma_1\hs_1+\hs \Sigma_2\hs_1-2\mathrm{trace}\left\{\sqrt{\Sigma_1^{1/2}\Sigma_2\Sigma_1^{1/2}}\right\}.
\]
The key observation in this paper is that the optimal transport theory developed in \citet{masarotto2018procrustes} can be directly leveraged in order to provide natural notions of ``roots" (or linearisations) that:
\begin{itemize}
\item are unequivocally defined without any unitary ambiguity;

\item are efficiently computable;

\item that offer remarkable power when used in a covariance ANOVA;

\item can be used in order to obtain a natural PCA, when the equality of covariances is rejected.

\end{itemize}
These are defined via the notion of an optimal multicoupling, and are introduced in the next Section.

\subsection{Optimal Multicoupling and Transport Maps}
As already stated, our strategy for testing $H_0:\Sigma_1=\cdots=\Sigma_K$ is to view the covariance operators through the lens of optimal multicoupling of Gaussian processes. Specifically, we observe that the collection of covariances $\{\Sigma_1,\ldots,\Sigma_K\}$ can be bijectively identified with a collection of centred Gaussian measures $\{N(0,\Sigma_1),\ldots,N(0,\Sigma_K)\}$ on the Hilbert space $\mathcal{H}$. Denote these measures as $\{\gamma_1,\dots,\gamma_K\}$. Equality of the covariance operators thus holds true, if and only if the measures $\{\gamma_j\}$ coincide. Viewing the measures $\{\gamma_j\}$ as the marginals of a joint measure $\pi$ on $\mathcal{H}^K$, one can ask what are the possible forms of $\pi$. This set of possible joint measures $\pi$ is always non-empty (it always contains the product measure), and is called the set of \emph{multicouplings} of $\{\gamma_1,\dots,\gamma_K\}$. An \emph{optimal multicoupling} is a multicoupling $\pi^*$ such that the marginals are as tightly coupled as possible in a pairwise mean-square sense, in that it minimizes the functional
\[
F(\pi)
=\frac{1}{2K^2}\sum_{i,j=1}^{K}
\int_{\mathcal{H}^K} \|x_i-x_j\|^2 \pi(dx_1,\ldots,dx_K).
\]
Said differently, $\pi^*$ is the joint distribution of collection of $K$ Gaussian processes on $\mathcal{H}$, say $(Z_1,\ldots, Z_K)$, such that $Z_j\sim N(0,\Sigma_j)$ marginally for all $j\le K$, while $\sum_{i<j}\mathbb{E} \|Z_i - Z_j\|^2$ is minimized.  Existence of finite second moments of Gaussian measures (a consequence of Fernique's \citet{fernique1970integrabilite} theorem) implies that $F(\pi)$ is finite for any multicoupling $\pi$.  It can be shown that an optimal multicoupling of Gaussians always exists (\citet{masarotto2018procrustes}). We say that such an optimal multicoupling $\pi^*$ is manifested by (deterministic) transport maps if the collection $(Z_1,\ldots, Z_K)\sim \pi^*$ can be generated by taking a single process $Z$, and a collection of deterministic maps $\mathbf t_j:\mathcal{H}\rightarrow \mathcal{H}$ such that
\[
(Z_1,\ldots, Z_K)\stackrel{d}{=}(\mathbf{t}_1(Z),\ldots,\mathbf{t}_K(Z)).
\]
In other words, an optimal multicoupling $\pi^*$ is generated by deterministic maps if  it is supported on the graph of a vector-valued function from $\mathcal H$ to $\mathcal H^K$.  It is a priori unclear whether a deterministic multicoupling exists in general, and if it does, whether the maps $\mathbf{t}_j$ are bounded. but it is not hard to see that it will exist under the null $H_0$ and that it will be ``trivial":
\begin{lemma}\label{lem:nullHyp}
The equality $\Sigma_1=\cdots=\Sigma_K$ holds true if and only if the (unique) optimal multicoupling of $(\gamma_1,\ldots,\gamma_K)$ can be achieved by transport maps satisfying $\mathbf{t}_1=\cdots=\mathbf{t}_K$.
\end{lemma}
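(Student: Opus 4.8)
The plan is to prove the two implications separately, in each case exploiting the elementary fact that a centred Gaussian measure on $\mathcal{H}$ is uniquely determined by its covariance operator, together with the existence and uniqueness of the optimal Gaussian multicoupling (guaranteed by \citet{masarotto2018procrustes}). The latter is what lets one pin down the optimal coupling simply by exhibiting a candidate of zero cost.

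For the ``only if'' direction, I would assume $\Sigma_1=\cdots=\Sigma_K=\Sigma$, so that $\gamma_1=\cdots=\gamma_K=\gamma:=N(0,\Sigma)$, and take $Z\sim\gamma$. Consider the coupling $\pi$ under which $(Z_1,\ldots,Z_K)\stackrel{d}{=}(Z,\ldots,Z)$, i.e. the push-forward of $\gamma$ under the diagonal embedding $z\mapsto(z,\ldots,z)$. Each of its marginals is $\gamma$, so it is a valid multicoupling, and $F(\pi)=0$ since $\|Z_i-Z_j\|=0$ almost surely for all $i,j$. As $F\ge0$ on every multicoupling, $\pi$ is optimal, and by uniqueness $\pi^*=\pi$. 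Since $\pi$ is manifested by the deterministic maps $\mathbf{t}_1=\cdots=\mathbf{t}_K=\mathscr{I}$, which trivially satisfy $\mathbf{t}_1=\cdots=\mathbf{t}_K$, this direction is complete.

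For the ``if'' direction, I would suppose that $\pi^*$ is manifested by deterministic maps with $\mathbf{t}_1=\cdots=\mathbf{t}_K=\mathbf{t}$, so that $(Z_1,\ldots,Z_K)\stackrel{d}{=}(\mathbf{t}(Z),\ldots,\mathbf{t}(Z))$ for a single process $Z$. Then $\pi^*$ is supported on the diagonal of $\mathcal{H}^K$, hence all of its marginals coincide: $\gamma_i=\mathrm{law}(\mathbf{t}(Z))=\gamma_j$ for every pair $i,j$. Since $\gamma_j=N(0,\Sigma_j)$ and a centred Gaussian measure determines its covariance, we conclude $\Sigma_i=\Sigma_j$ for all $i,j$.

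I do not anticipate a genuine obstacle here: the argument is short precisely because it offloads all the analytic content onto the (already available) existence and uniqueness of the optimal Gaussian multicoupling. The substantive issue — whether a \emph{deterministic}, and indeed \emph{bounded linear}, multicoupling exists \emph{without} assuming $H_0$ — is exactly the content of Theorem~\ref{thm:bounded_maps}, and that is where the real work will be required.
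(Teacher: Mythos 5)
Your proposal is correct and follows essentially the same route as the paper: under the null the diagonal (zero-cost) coupling is optimal and is manifested by identical (identity) maps, while conversely equal maps force equal marginal laws and hence equal covariances, since a centred Gaussian is determined by its covariance. The only cosmetic difference is that you invoke general uniqueness of the optimal multicoupling from \citet{masarotto2018procrustes} (which the paper only describes as ``typical''), whereas all that is needed---and what the paper uses---is uniqueness under the null, which follows directly from the zero-cost argument forcing the coupling to be supported on the diagonal.
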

The maps $\topt j{}$ are called transport maps because they can be thought of as ``transporting" the (unspecified) law of $Z$ to that of $Z_j$.  The lemma suggests that we can detect departures from the hypothesis $\{H_0:\Sigma_1=\cdots=\Sigma_K\}$ by focussing on departures from the hypothesis $\{H'_0:\mathbf{t}_1=\cdots=\mathbf{t}_K\}$. But to even speak of departures from $H_0'$, we must be assured that such maps exist even under the alternative regime, and this existence is not a priori guaranteed (see Conjecture~17 and the discussion in Section~12 of \cite{masarotto2018procrustes}). Furthermore, to quantify the extent of departures from the null, we need to make sure that the multicoupling maps not only exist, but are bona fide bounded linear operators over all of $\mathcal{H}$, and can thus be contrasted by appropriate norms. 

Our main theoretical result, in the form of the following theorem, shows that a multicoupling can always be realised by means of bounded deterministic maps, a result that is of independent interest in optimal transport in its own right. It provides a partial positive resolution of Conjecture~17 in \cite{masarotto2018procrustes}.
\begin{theorem}\label{thm:bounded_maps}
Let $\{\gamma_1,\ldots,\gamma_K\}$ be an arbitrary finite collection of Gaussian measures on $\mathcal{H}$ with mean zero.  Then there exists an optimal multicoupling of $\{\gamma_j\}_{j=1}^{K}$ manifested by deterministic transport maps $\mathbf{t}_j:\mathcal{H}\rightarrow\mathcal{H}$ that are bounded non-negative linear operators satisfying $\hs \mathbf{t}_j \hs_{\infty}\le K$, for all $j\leq K$.
\end{theorem}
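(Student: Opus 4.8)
The plan is to pass from the simultaneous-transport problem to a \emph{barycentre} problem and then to extract the norm bound from non-negativity. Expanding the cost, for any multicoupling $\pi$ with coordinates $X_j\sim\gamma_j$ and barycentre $\bar X:=K^{-1}\sum_i X_i$ one has the identity $F(\pi)=\frac1K\mathbb{E}_\pi\sum_{j=1}^K\|X_j-\bar X\|^2$, whence $F(\pi)\ge\frac1K\sum_j W_2^2(\gamma_j,\mathrm{law}(\bar X))\ge\frac1K\inf_\nu\sum_j W_2^2(\gamma_j,\nu)$, with $W_2$ the $L^2$-Wasserstein (equivalently, Procrustes) distance on $\mathcal H$. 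The minimiser $\bar\gamma=N(0,\bar\Sigma)$ of this Fr\'echet functional is Gaussian, exists and is unique by \citet{masarotto2018procrustes}. If for each $j$ the optimal transport map $\mathbf t_j$ from $\bar\gamma$ to $\gamma_j$ exists and is a non-negative bounded linear operator, then the first-order optimality condition for the Fr\'echet mean, $K^{-1}\sum_j\mathbf t_j=\mathscr I$ on $\overline{\range(\bar\Sigma)}$ (see \citet{zemel2019frechet} and \citet{masarotto2018procrustes}), shows that $(\mathbf t_1 Z,\dots,\mathbf t_K Z)$ with $Z\sim\bar\gamma$ attains the displayed lower bound, hence is an optimal multicoupling manifested by the $\mathbf t_j$.

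Granting existence of such maps, the quantitative bound is immediate, and is really the easy half of the statement. Each $\mathbf t_j$ is symmetric and non-negative, and on the dense domain where $\sum_i\mathbf t_i=K\mathscr I$ one has $0\le\langle\mathbf t_j u,u\rangle\le K\|u\|^2$; Cauchy--Schwarz for the semidefinite form $(u,v)\mapsto\langle\mathbf t_j u,v\rangle$ then gives $\|\mathbf t_j u\|=\sup_{\|v\|=1}|\langle\mathbf t_j u,v\rangle|\le K\|u\|$, that is $\hs\mathbf t_j\hs_{\infty}\le K$. Extending $\mathbf t_j$ by $\mathscr I$ on $\ker\bar\Sigma$ — which carries no $\bar\gamma$-mass and so does not alter the coupling — keeps the operator non-negative and the bound intact, yielding operators on all of $\mathcal H$ as in the statement.

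The genuine difficulty is therefore the \emph{existence} of the maps: on an infinite-dimensional $\mathcal H$ the optimal coupling of two Gaussians need not be induced by an operator, which is the content of Conjecture~17 of \citet{masarotto2018procrustes}. I would resolve it by a joint finite-dimensional approximation. Fix an orthonormal basis with coordinate projections $P_n$ and set $\Sigma_j^{(n)}=P_n\Sigma_jP_n+\varepsilon_nP_n$ with $\varepsilon_n\downarrow0$ chosen fast enough that $\hs\Sigma_j^{(n)}-\Sigma_j\hs_1\to0$ for every $j$. On the resulting finite-dimensional, strictly positive problem the optimal maps $\mathbf t_j^{(n)}$ exist, are non-negative, and satisfy $K^{-1}\sum_j\mathbf t_j^{(n)}=P_n$, so $\hs\mathbf t_j^{(n)}\hs_{\infty}\le K$ \emph{uniformly} in $n$ — and this uniform bound is precisely what powers the passage to the limit. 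Along a subsequence $\mathbf t_j^{(n)}$ converges in the weak operator topology to a non-negative $\mathbf t_j$ with $\hs\mathbf t_j\hs_{\infty}\le K$, the approximating barycentres $N(0,\bar\Sigma^{(n)})$ converge by continuity of the Fr\'echet functional under trace-norm perturbations, and the couplings $\mathrm{law}(\mathbf t_1^{(n)}Z_n,\dots,\mathbf t_K^{(n)}Z_n)$ are tight with a weakly convergent subsequence whose limit is optimal. The step I expect to be hardest is verifying that the limiting $\mathbf t_j$ genuinely transports $\bar\gamma$ to $\gamma_j$, for which weak operator convergence is insufficient on its own: here one combines the trace-norm convergence $\bar\Sigma^{(n)}\to\bar\Sigma$ with compactness of $\bar\Sigma$ to pass $\mathbf t_j^{(n)}\bar\Sigma^{(n)}\mathbf t_j^{(n)}\to\mathbf t_j\bar\Sigma\mathbf t_j$, together with the limiting identity $K^{-1}\sum_j\mathbf t_j=\mathscr I$ on $\overline{\range(\bar\Sigma)}$. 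A route that sidesteps the approximation altogether is to establish the range inclusion $\range(\Sigma_j^{1/2})\subseteq\range(\bar\Sigma^{1/2})$ directly from the barycentre equations; by the criterion of \citet{masarotto2018procrustes} this already forces the optimal coupling $\bar\gamma\to\gamma_j$ to be an operator, after which boundedness and the bound $\le K$ follow exactly as in the second paragraph.
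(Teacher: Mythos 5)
Your overall architecture --- reduce the multicoupling problem to the barycentre problem, realise the coupling as $(\mathbf t_1Z,\dots,\mathbf t_KZ)$ with $Z\sim N(0,\overline\Sigma)$, and extract $\hs\mathbf t_j\hs_\infty\le K$ from $0\le \mathbf t_j$ together with $K^{-1}\sum_i\mathbf t_i=\mathscr I$ --- is sound, and your optimality step is essentially the same as the paper's appeal to \citet[Proposition~2]{zemel2019frechet} (you should, however, state explicitly that a bounded non-negative self-adjoint linear transport map is automatically an \emph{optimal} map, being the gradient of the convex quadratic $x\mapsto\tfrac12\langle\mathbf t_jx,x\rangle$; without this, $\mathbb E\|\mathbf t_jZ-Z\|^2=\Pi^2(\Sigma_j,\overline\Sigma)$ and hence the saturation of your lower bound is not justified). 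The genuine gap lies in the existence argument. Your approximation scheme rests on the assertion that the regularised barycentres $\overline\Sigma^{(n)}$ of $\Sigma_j^{(n)}=P_n\Sigma_jP_n+\varepsilon_nP_n$ converge in trace norm to a Fr\'echet mean of the original $\Sigma_j$; you justify this by ``continuity of the Fr\'echet functional under trace-norm perturbations'', but continuity of a functional does not yield convergence of its minimisers, and if the limit of the $\overline\Sigma^{(n)}$ is not itself a barycentre, your optimality chain $F(\pi)\ge K^{-1}\sum_jW_2^2(\gamma_j,\mathrm{law}(\bar X))\ge K^{-1}\inf_\nu\sum_jW_2^2(\gamma_j,\nu)$ is no longer saturated by the constructed coupling. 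What is needed here is a genuine stability theorem for Wasserstein barycentres (compactness of $\{\overline\Sigma^{(n)}\}$ in a suitable topology, lower semicontinuity of the Fr\'echet functional along the approximation, identification of the limit as a minimiser, and upgrading of the convergence to trace norm); this is provable but it is the load-bearing step, and it cannot be waved through. By contrast, the step you flag as hardest is actually unproblematic: with $\hs\mathbf t_j^{(n)}\hs_\infty\le K$, weak-operator convergence $\mathbf t_j^{(n)}\to\mathbf t_j$, and $\hs\overline\Sigma^{(n)}-\overline\Sigma\hs_1\to0$, one gets $\mathbf t_j^{(n)}\overline\Sigma^{(n)}\mathbf t_j^{(n)}\to\mathbf t_j\overline\Sigma\mathbf t_j$ weakly, while the left-hand side equals $\Sigma_j^{(n)}\to\Sigma_j$.

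The paper avoids the approximation entirely: taking a Fr\'echet mean $\overline\Sigma$ (whose existence is already known from \citet{masarotto2018procrustes}), the fixed-point equation gives the operator inequality $Q_j:=(\overline\Sigma^{1/2}\Sigma_j\overline\Sigma^{1/2})^{1/2}\le K\overline\Sigma$, and Douglas's majorization lemma (Lemma~\ref{lem:Baker}) then factors $Q_j^{1/2}=\overline\Sigma^{1/2}G$ with $\hs G\hs_\infty\le\sqrt K$, so that $\mathbf t_j=GG^*$ is immediately a well-defined bounded non-negative operator with $\hs\mathbf t_j\hs_\infty\le K$; no limiting argument is required. Your closing alternative --- deducing the range inclusion $\range(\Sigma_j^{1/2})\subseteq\range(\overline\Sigma^{1/2})$ directly from the barycentre equations --- is in substance this very argument (Douglas's lemma is precisely the majorization-to-factorization device), but you leave it unexecuted. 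Two smaller caveats: as a standalone statement your ``easy half'' is mildly circular, since having $\sum_i\mathbf t_i=K\mathscr I$ on a dense common domain already presupposes existence plus the fixed-point property (within your scheme the bound instead comes for free from $0\le\mathbf t_j^{(n)}\le KP_n$, which is fine); and uniqueness of the barycentre, which you assert, requires an injectivity-type condition and is not available in general --- fortunately your argument does not actually need it.
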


Although the optimal coupling $\pi^*$ is typically unique, its representation in terms of the maps is not.  For instance, if $\pi^*$ is manifested as the law of $(\topt1{}(Z),\dots,\topt{K}{}(Z))$, it may also be represented as $(2\topt1{}(Z/2),\dots,2\topt{K}{}(Z/2))$.  It is natural to take $Z\sim N(0,\overline\Sigma)$, where $\overline\Sigma$ is a centre, i.e., the Fr\'echet mean of $\Sigma_1,\dots,\Sigma_K$ with respect to the Procrustes metric.  This choice forces the maps $\topt{j}{}$ to have mean identity (see \eqref{eq:mean_identity} below), and in particular they must be the identity under the null \eqref{null_hypothesis}, so that \eqref{transport_null} holds.  Using this convention, the existence and boundedness result in Theorem~\ref{thm:bounded_maps} opens the way for a testing procedure: the deviations $\Delta_j=\mathbf{t}_j-\mathscr{I}$ are all self-adjoint and bounded, but no longer restricted to be non-negative. When $H_0$ is valid, Lemma~\ref{lem:nullHyp} implies that $\Delta_j=\mathbf{t}_j-\mathscr{I}=0$ for all $j$. Under the alternative, at least one $\Delta_j$ is non-zero. We can thus replace the null hypothesis
\[
H_0:\Sigma_1=\cdots=\Sigma_K
\]
by the equivalent hypothesis
\[
H_0':\Delta_1=\cdots=\Delta_K=0
\]
viewing the $\Delta_j$ as elements of a linear space, and reducing the original testing problem to a more traditional linear functional ANOVA setting. Since the $\Delta_j$ are guaranteed to be bounded (Theorem~\ref{thm:bounded_maps}), they can certainly be contrasted to 0 using the operator norm. However, one can devise even more powerful procedures by measuring the size of $\Delta_j$ in a stronger norm, such as the  the Hilbert--Schmidt norm, or even the trace norm. In the finite-dimensional setting this choice of norm will typically not make much difference, since all norms are equivalent. But in the infinite-dimensional case, a finer norm will detect subtle departures from the null. For instance, if $K=2$ and $\Sigma_2=\delta^2\Sigma_1$ for some $\delta\ge0$, then one can show that for $j=1,2$, $\hs\Delta_j\hs_{\infty}=|1-\delta|/(1+\delta)\le 1$ while $\hs \Delta_j\hs_{2}=\infty$ unless $\delta=1$.  Similarly, if the $\Delta_j$ are Hilbert--Schmidt but not trace class, their trace norm will be infinite, promising to furnish high power even against very local alternatives. This genuinely functional phenomenon is not unlike the possibility of perfect discrimination of Gaussian processes (\citet{feldman1958equivalence,hajek1958property,rao1963discrimination}; see Section \ref{sec:conclusions} for a more detailed discussion). It is demonstrated empirically in our later simulations.  If there are only $K=2$ populations, then in view of \eqref{eq:mean_identity}, $\Delta_2=-\Delta_1$ and the test statistic is $\hs\Delta_1\hs_r$.  When comparing more than two covariance operators, the criteria $\hs \Delta_j \hs_r$ will need to be combined into a single criterion (e.g., by taking their supremum over $j$ or by summation).

How does one concretely construct a deterministic multicoupling $\{\mathbf{t}_j\}$, and hence the $\{\Delta_j\}$ in practice?  In proving Theorem~\ref{thm:bounded_maps} we establish the existence and boundedness of the maps
\begin{equation}\label{eq:multi_def}
\mathbf{t}_j
={\Sigma}^{-1/2}( {\Sigma}^{1/2}\Sigma_j {\Sigma}^{1/2})^{1/2} {\Sigma}^{-1/2},
\quad j=1,\ldots,K,
\end{equation}
where ${\Sigma}$ is a Fr\'echet mean of $\{\Sigma_j\}_{j=1}^{K}$ with respect to the procrustes metric $\Pi$, i.e., a minimiser of the sum-of-squares functional $\Gamma\mapsto \sum_{j=1}^{K}\Pi^2(\Gamma,\Sigma_j)$ over the space of trace-class covariances.  Moreover, the $\topt j{}$ are centred around the identity in that
\begin{equation}\label{eq:mean_identity}
\frac{1}{K}\sum_{j=1}^{K}\mathbf{t}_j
=\mathscr{I}.
\end{equation}
The Fr\'echet mean $\Sigma$ is unique when at least one of the $\Sigma_j$ is injective (or more generally, if the kernel of at least one $\Sigma_j$ is contained in the kernels of all other $\Sigma_j$); see \citet[Proposition~10]{masarotto2018procrustes}. Its algorithmic construction is discussed in detail in Section~\ref{sec:computation}. 

Once the multicoupling \eqref{eq:multi_def} has been constructed, and a norm $\hs\cdot\hs_r$ ($r=1,2,\infty$) has been chosen, the null hypothesis can be tested by measuring a combined deviation of the $\Delta_j$ from zero using that norm, and calibrating the typical values of such deviations under the null. This is discussed in the next subsection.  As discussed in \citet{masarotto2018procrustes}, the optimal multicouping has an elegant geometrical interpretation in terms of the manifold geometry of the Procrustes distance --- this will later be exploited in Section~\ref{pca_section} in order to construct a functional PCA of the covariance operators.

\subsection{Transportation-Based Functional ANOVA of Covariances}
Assume now that we have $K$ independent groups of functional data $\{X_{ij},\ j=1,\dots,n, \ i=1, \dots,K\}$, each having covariance $\Sigma_i$ (the procedure can be easily adapted to different group sizes).  Without loss of generality, the data are assumed to be zero mean. Based on the discussion in the previous paragraph, we can test the equality of covariance operators by means of testing the hypothesis
\[
H_0':\underset{=\Delta_1}{\underbrace{\mathbf{t}_1-\mathscr{I}}}=\cdots=\underset{=\Delta_K}{\underbrace{\mathbf{t}_K-\mathscr{I}}}=0
\]
where
$$
\mathbf{t}_j={\Sigma}^{-1/2}({\Sigma}^{1/2}\Sigma_j{\Sigma}^{1/2})^{1/2}{\Sigma}^{-1/2},\quad j=1,\ldots,K,
$$
and ${\Sigma}$ is a Fr\'echet mean of $\{\Sigma_j\}_{j=1}^{K}$ with respect to the procrustes metric $\Pi$. At the level of our sample, we have access to empirical versions of $\{\hat\Sigma_j\}$, constructed on the basis of the samples of size $n$ from each group. These could simply be the empirical covariances within each group (under a complete observation assumption), or some smoothed estimator (for instance the empirical covariance of smoothed versions of the $\{X_{ij}\}$ (\citet{ramsay-silverman-2005}), or PACE-type estimators (\citet{yao-etal-2005a})). Whichever the case may be, the $\hat\Sigma_j$ are finite dimensional, of rank $q\leq n$. In case a smoothing technique is used, we assume that it is such that the $\hat\Sigma_j$ share a common range, and can thus be represented as $q\times q$ positive matrices, via a common (tensor product) basis.  For tidiness, we use the same notation for $\hat\Sigma_j$ and its $q\times q$ matrix representation in the common basis. It is clear that this basis can be chosen so that at least one of these matrices is of full rank $q$. 

In this case, there exists a unique empirical Fr\'echet mean $\hat\Sigma$,
\[
\hat\Sigma=\underset{\mathbb{R}^{q\times q}\ni\Gamma\succeq 0}{\arg\min}\sum_{j=1}^{K}\Pi^2(\hat\Sigma_j,\Gamma)
\]
and this can be computed from $\{\hat\Sigma_1,\ldots,\hat\Sigma_K\}$ using steepest descent (see Section~\ref{sec:computation}).  This gives rise to empirical versions of the $\mathbf{t}_j$,
\[
\hat{\mathbf{t}}_j=\hat{\Sigma}^{-1/2}(\hat{\Sigma}^{1/2}\hat\Sigma_j{\hat\Sigma}^{1/2})^{1/2}{\hat\Sigma}^{-1/2},\quad j=1,\ldots,K,
\]
and corresponding empirical deviations from the identity
\[
\hat\Delta_j=\hat{\mathbf{t}}_j-I_{q\times q}.
\]
The testing procedure is now based on the test statistic
\[
T_r=\sum_{j=1}^{K}\hs \Delta_j\hs_r^2,
\]
where $r\in\{1,2,\infty\}$ (the performance under the different choices of $r$ is investigated in the simulation section).  We avoid making any concrete parametric assumptions, and instead calibrate the test statistic by means of permutations.  The typical permuted value will be calculated according to the following steps:
\begin{itemize}
\item[-] Reassign the $n\times K$ curves $\{X_{i,j}\}$ into $K$ groups of equal size. Call these new groups $\{X^*_{i,j}\}$.
\item[-] Construct the empirical covariance ${\hat\Sigma^*_{j}}$ for the $j$th group $\{X^*_{i,j}\}_{i=1}^{n}$, $j=1,\ldots,K$.
\item[-] Compute empirical Fr\'{e}chet mean $\hat{\Sigma}^*$ of $\{\hat\Sigma^*_{1},\ldots,\hat\Sigma^*_{K}\}$.
\item[-] Construct $\hat{\mathbf{t}}_j^*=(\hat{\Sigma}^*)^{-1/2}\sqrt{(\hat{\Sigma}^*)^{1/2}\hat\Sigma_j^*({\hat\Sigma}^*)^{1/2}}({\hat\Sigma}^*)^{-1/2}$ and compute 
\[
T_r^*
=\sum_{j=1}^{K}\hs \hat{\mathbf{t}}_j^*-I_{q\times q}\hs_r^2=\sum_{j=1}^{K}\hs \hat{\Delta}_j^*\hs_r^2.
\]
\end{itemize}
Repeating these steps for all possible re-assignments yields the distribution for the permuted statistics $T^*_r$, which can be used to generate a $p$-value for $T_r$ under the null hypothesis. As usual, an exact such $p$-value can become prohibitive for large $K$, and in practice we resort to Monte Carlo sampling of permutations. Note that similar steps allow for the implementation of a bootstrap-type procedure, simply by randomly permuting indices with replacement. However, we opt for the permutation approach since the exchangeability of the permutation labels under $H_0$ guarantees the (near) exactness of the $K$-sample permutation test (\citet{pesarin2010permutation}), and we do not pursue the bootstrap approach further.

\subsection{Transportation-Based Functional PCA of Covariances}\label{pca_section}\label{sec:MethPCA}
When the null hypothesis of equality between covariances is rejected, the analyst may wish to explore whether the detected differences are carried by some interpretable main modes of variation. The transport maps $\mathbf{t}_j$ (or their empirical versions) can be used to this aim. When these exist, the differences $\Delta_j=\mathbf{t}_j-\mathscr{I}$ admit an elegant geometric interpretation as the logarithms of the operators $\Sigma_j$ at the Fr\'echet mean $\Sigma$, under the manifold-like geometry induced by the Procrustes metric $\Pi(\cdot,\cdot)$ on the space of (trace-class) covariance operators. Specifically, \citet{masarotto2018procrustes} show that it admits a tangent space with respect to geometry induced by $\Pi$ that is characterised as
\[
\Tan_\Sigma=\overline{\left\{
Q:  Q=Q^*,\, \hs \Sigma^{1/2} Q\hs_2<\infty\right\}}
\]
where the closure is with respect to the inner product
\[
\langle Q_1,Q_2\rangle_{\Sigma}=\mathrm{trace}(Q_1\Sigma Q_2).
\]
When $\Sigma$ is injective, this is a bona fide inner product, that is, $\innprod QQ=0\iff Q=0$.  Unfortunately, whether the Fr\'echet mean is injective or not remains an open question, even if all $\Sigma_j$ are so (see \cite[Conjecture~17]{masarotto2018procrustes}), and it is not clear that the $\Sigma_j$'s can be lifted to the tangent space.  Nevertheless, our new result in the form of Theorem~\ref{thm:bounded_maps} guarantees that the maps $\Delta_j$ do exist as bounded self-adjoint operators, and indeed the 1-form
\[
\mathrm{trace}(\Delta_i\Sigma \Delta_j)
\le  \hs\Sigma^{1/2}\Delta_i \hs_2 \hs\Sigma^{1/2}\Delta_j \hs_2
= \Pi(\Sigma_i,\Sigma)\Pi(\Sigma_j,\Sigma)
<\infty
\]
is well-defined, regardless of the injectivity of $\Sigma$ by means of the formulae for $\mathbf{t}_i$ and $\Pi$. Consequently, the finite-dimensional span of $\{\Delta_1,\ldots,\Delta_K\}$ admits a Hilbertian structure when equipped with the inner product $\langle \cdot,\cdot\rangle_{\Sigma}$, and for all practical purposes can be used to carry out a PCA\footnote{Such a PCA can be interpreted as a tangent space PCA with respect to a \emph{Procrustean metric tensor} 
\[
\langle Q_1,Q_2\rangle_{\Gamma}
=\mathrm{trace}(Q_1\Gamma Q_2)
,\quad \Gamma\in \mathscr{L}
=\left\{\underset{\Gamma\succeq 0}{\arg\min}\sum_{j=1}^{K}\alpha_j\Pi^2(\hat\Sigma_j,\Gamma):\alpha_j>0 \,\&\,\sum_{j=1}^K\alpha_j=1\right\}
\]
over the barycentric locus $\mathscr{L}$ of the operators $\{\Sigma_j\}_{j=1}^{K}$.} based on the spectral decomposition of the non-negative operator
\[
\mathscr{K}
=\frac{1}{K}\sum_{j=1}^K\Delta_j\otimes_{\Sigma}\Delta_j
=\frac{1}{K}\sum_{j=1}^K
\left(\topt j{} - \mathscr{I}\right)\otimes_{\Sigma} \left( \topt{j}{} - \mathscr{I}\right),
\]
where $(A\otimes_{\Sigma}B)C=\langle B,C\rangle_{\Sigma}A$. Notice that the latter constitutes precisely the empirical covariance of the collection $\{\Delta_j\}_{j=1}^{K}$, because
\[
\sum_{j=1}^K\Delta_j=0,
\]
by Equation~\eqref{eq:mean_identity}. Once the principal components are constructed, the main modes of covariance variation can be visualised by retracting appropriate subspaces of the tangent space back to the space of covariance operators.  Specifically, if $E_1$ is the eigenoperator associated with the largest eigenvalue of $\mathscr{K}$, this retraction takes the form
\[
t\mapsto (\mathscr{I}+tE_1)\Sigma(\mathscr{I}+tE_1),\qquad t\in[-\epsilon,\epsilon],
\]
which is a geodesic for sufficiently small $\epsilon>0$.  This principal geodesic is the visualisation of the main mode of variation of $\{\Sigma_j\}_{j=1}^K$ near their Fr\'echet mean $\Sigma$.

A subtlety here is that the PCA is to be carried out on a Hilbert space endowed with an inner product other than the standard Hilbert--Schmidt inner product. This different choice of inner product affects both the formal definition and the computational evaluation of the principal components. The defining maximisation problem yielding the first principal component is now
\[
\underset{{\hs B\hs_{\Sigma}=1}}{\arg\max}\langle \mathscr{K}B,B\rangle_{\Sigma}
=\underset{{\hs \Sigma^{1/2}A\hs_{2}=1}}{\arg\max}\langle \mathscr{K}\Sigma^{1/2}A,\Sigma^{1/2}A\rangle_{2}
=\underset{{\mathrm{trace}(A\Sigma A)=1}}{\arg\max}\mathrm{trace}(\mathscr{K}\Sigma^{1/2}A^2\Sigma^{1/2}).
\]
Nevertheless, this change of inner product poses no essential difficulty, and has indeed considered before by \citet{silverman1996smoothed} in the case of Sobolev inner products, and generalised by \citet{ocana1999functional}.  Further details are given in Section~\ref{sec:computation}.

\section{Computational Implementation}\label{sec:computation}
In the next Section we will work with $K$ independent groups of functional data $\{X_{ij},\
j=1,\dots,K, \ i=1, \dots,n_j\}$, each group of sample size $n_j$ and with covariance operator $\Sigma_j$, $j=1, \dots,K$. Unless otherwise stated, all curves are curves are simulated from a multivariate Gaussian process and sampled on an equispaced grid on $\Omega=[0,1]$. The sample size and the grid points vary across applications, therefore, in practice, we only have access to estimated empirical covariances $\widehat{\Sigma}_1,\dots,\widehat{\Sigma}_K$. In our case, $\widehat{\Sigma}_1,\dots,\widehat{\Sigma}_K$ are obtained from the smoothed versions of the $\{X_{ij}\}$ as traditional sample covariance functions, through the command \texttt{var.fd} in the \texttt{R} package \texttt{fda} (\citet{ramsay2005springer,ramsay2021package}). If a smoothed version of the $\{X_{ij}\}$'s is not available, a PACE-type estimator can be used (\citet{yao-etal-2005a}).

\subsection{Transport ANOVA}
Once estimators $\widehat{\Sigma}_1,\dots,\widehat{\Sigma}_K$ are at our disposal, our transport ANOVA requires their Fr\'echet mean
\[
\overline\Sigma
=\arg\min_\Sigma \sum_{j=1}^K \Pi^2(\Sigma,\widehat{\Sigma}_j)
\]
and the transport maps contrasted with the identity
\[
\Delta_j
=\topt j{} - \mathscr I
={\overline\Sigma}^{-1/2}( {\overline\Sigma}^{1/2}\widehat{\Sigma}_j{\overline\Sigma}^{1/2})^{1/2} {\overline\Sigma}^{-1/2} - \mathscr I,
\quad j=1,\ldots,K.
\]
When $\widehat{\Sigma}_j$ commute ($\widehat{\Sigma}_j\widehat{\Sigma_i}=\widehat{\Sigma_i}\widehat{\Sigma}_j$ for all $i,j$), the $\overline\Sigma$ has the explicit form
\[
\overline\Sigma^{1/2}
=K^{-1}\left[\widehat{\Sigma}_1^{1/2} + \dots + \widehat{\Sigma}_1^{1/2}\right].
\]
However, there is no reason that these commutativity should hold.  For general covariances, $\overline\Sigma$ has no closed form formula, but it can be approximated by the iterative procedure described in \cite[Section~8]{masarotto2018procrustes}.  It can be interpreted as steepest descent in the Procrustes space of covariances, and in finite dimensions provably approximates $\overline\Sigma$ and $\Delta_j$ to arbitrary precision.  It is carried out as follows:
\begin{itemize}
\item Let $\Sigma^0:\mathcal{H}\rightarrow\mathcal{H}$ be an injective covariance, serving as the initial point. 

\item Denote the current iterate at step $k$ as $\Sigma^k$.  

\item For each $j$ compute the optimal maps from $\Sigma^k$ to each of the prescribed operators $\widehat{\Sigma}_j:\mathcal{H}\rightarrow\mathcal{H}$, namely 
\[
\topt{\Sigma^k}{\widehat{\Sigma}_j}
=(\Sigma^k)^{-1/2}[(\Sigma^k)^{1/2}\widehat{\Sigma}_j(\Sigma^k)^{1/2}]^{1/2}(\Sigma^k)^{-1/2}.
\]
\item Define their average $T_k=K^{-1}\sum_{j=1}^K \topt{\Sigma^k}{\widehat{\Sigma}_j}$, which is itself a non-negative operator on $\mathcal{H}$.

\item Set the next iterate to $\Sigma^{k+1}=T_k\Sigma^kT_k$.
\end{itemize}
In practice the algorithm will stop after, say, $k$ iterations, $\Sigma^{k}$ will be our numerical approximation for $\overline\Sigma$ and $\topt{\Sigma^k}{\widehat{\Sigma}_j}-\mathscr I$ will approximate $\topt j{}-\mathscr I$.

In terms of the manifold-like geometry of covariances under the Procrustes metric (see Section~\ref{sec:MethPCA}), the algorithm starts with an initial guess of the Fr\'echet mean; it then lifts all observations to the tangent space at that initial guess via the log map, and averages linearly on the tangent space; this linear average is then retracted onto the manifold via the exponential map, providing the next guess, and iterates.  The quantity $T_k-\mathscr{I}$ is precisely the \emph{negative gradient} of the Fr\'echet (sum-of-squares) functional, which is the reason why this is steepest descent. 
The test statistic, a linear combination (or the maximum) of powers of $\hs \Delta_j \hs_2^2$, is readily computable from the spectral decomposition. 

\subsection{Transport PCA}\label{subsec:transport_pca}
Once the empirical Fr\'echet mean $\overline\Sigma$ of $\widehat{\Sigma}_1,\dots,\widehat{\Sigma}_K$ and the $\widehat{\Delta}_j$'s are computed (see the previous section), we can perform functional PCA on the collection $\{\widehat{\Sigma}_1,\dots,\widehat{\Sigma}_K\}$ by their tangent space representation $\widehat{\Delta}_1,\dots,\widehat{\Delta}_K$ (see Section~\ref{sec:MethPCA}).  As explained there, there are some subtleties involved in this PCA, since we are working with a different inner product than the standard Hilbert--Schmidt one.  However, \citet{ocana1999functional} proved that PCA with respect to the tangent space inner product is equivalent to the PCA performed with the Hilbert--Schmidt inner product on suitably transformed data, thus allowing a framework to interpret standard Euclidean PCA in the Procrustes geometry.  More precisely, let $\langle \cdot ,\cdot \rangle_{HS}$ be the Hilbert--Schmidt inner product and $\langle \cdot,\cdot \rangle_{\Sigma}$ be the Wasserstein one at the tangent space at $\bar\Sigma$, that is $\langle A,B\rangle_\Sigma=\mathrm{tr}(A{\Sigma}B)$.  We follow the steps by \citet{ocana1999functional}, using that there is a unique operator $\mathscr T$ characterised by 
\[
\langle A,B\rangle_\Sigma
=\langle \mathscr T(A),B\rangle_{HS}
=\mathrm{tr}([\mathscr T(A)]^*B).
\]
which in our case we take to be the multiplication from the right by $\Sigma$ (so $\mathscr T(A)=(A\Sigma^{1/2})\Sigma^{1/2}$ is trace class and has an adjoint).  
$\mathscr T$ is nonnegative and the PCA of some data $(\mathscr X_1,\dots,\mathscr X_n)$ with respect to $\langle \cdot,\cdot\rangle_\Sigma$ 
is computed as the PCA of $[\mathscr T^{1/2}(\mathscr X_i)]_{i=1}^n$ with Hilbert--Schmidt norm, in the sense that the eigenvalues (i.e. the variances) remain the same and the eigenfunctions with respect to $\innprod\cdot\cdot_\Sigma$ are $\mathscr T^{1/2}$ applied to the eigenfunctions with respect to $\innprod\cdot\cdot_{HS}$.  In our specific case, $\mathscr T^{1/2}(\mathscr X)=\mathscr X\Sigma^{1/2}$, and the PCA on the tangent space is carried out as follows:
\begin{itemize}
\item[-] Multiply $\Delta_j=\topt j{} - \mathscr{I}$ from the right by $\Sigma^{1/2}$.
\item[-] Find the spectral decomposition of the empirical operator $$\tilde{\mathscr K}=K^{-1}\sum \Delta_j\Sigma^{1/2}\otimes \Delta_j\Sigma^{1/2},$$ defined on the space of Hilbert--Schmidt operators with respect to the Hilbert--Schmidt norm.
\item[-] Multiply (from the right) the eigenfunctions of $\tilde{\mathscr K}$ by $\Sigma^{-1/2}$ to obtain the eigenfunctions of $\mathscr K$.
\end{itemize}

\section{Numerical Experiments}\label{sec:Applications}

We now demonstrate the efficacy of the proposed methods through a variety of simulated examples. Simulations are broadly categorized into two subsets: functional ANOVA and tangent space PCA.
We initially explore the behavior in a scenario where the Fr\'echet mean is known and the transport-based functional ANOVA test is applied to covariances $\{\Sigma_j=T_j\Sigma T_j\}_{j=1}^K$ obtained via perturbation according to the generative model described in \citet[Section~10]{masarotto2018procrustes}.

To allow comparability of our method with the existing literature, we subsequently consider another simulation scenario, directly taken from \citet{cabassi2017permutation}, where the simulated covariance operators are perturbation of the male and female subjects in the Berkeley growth data set (\citet{jones1941berkeley}). In both scenarios, generative model and Berkeley data, we compare the functional ANOVA based on Transport Maps with the permutation test of \citet{cabassi2017permutation}, the concentration inequality method by \citet{kashlak2019inference} and the functional ANOVA method based on empirical characteristic functionals of \citet{hlavka2022functional}.  \citet{cabassi2017permutation} provide result showing that their method is state-of-the-art, and to the best of our knowledge, no other alternative procedures besides the ones listed exist. 
It is evident from Figure \ref{fig:ksample-generative} and \ref{fig:power-Gaussian} that our testing method over-powers other methods, reaching nearly perfect power even in the presence of small differences. 

After validating the performance of transport-based functional ANOVA,  in Section \ref{subsec-tgPCA} we make use of the generative model scenario to perform tangent space Principal Component.
Finally, in \ref{sec-dataanalysis} we test the performance of our method on the classic phoneme dataset (\citet{ferraty:2006}), which consists of 4509 log-periodograms of 5 different phonemes. The data is available at \texttt{https://hastie.su.domains/ElemStatLearn/}. Real data analysis consolidates the strength of our method with respect to the competitors, as it can be seen in Section \ref{subsec:anova-phoneme}. If the null hyphothesis of equality among covariance operators is rejected, Section \ref{subsec:pca-phoneme} shows how tangent space PCA can be a successful tool in understanding dataset variability. 

\subsection{Simulation Experiments}\label{subsec-simulation-exp}
In order to avoid propagation of error, it is convenient to formulate a simulation setup in which the Fr\'echet mean is known exactly and does not need to be approximated (Section~\ref{sec:computation}).  It is easy to construct such examples in the commutative case, but we shall not do so, as this case is overly restrictive and unrealistic.  We thus appeal to the generative model in \citet[Section~10]{masarotto2018procrustes}, which states that if a collection of nonnegative maps $T_1,\dots,T_K$ has mean identity, then any covariance operator $\Sigma$ is the Fr\'echet mean of $\{\Sigma_j=T_j\Sigma T_j\}_{j=1}^K$, and the maps $\topt j{}$ in \eqref{eq:multi_def} must equal $T_j$ (on the closed range of $\Sigma$).

To construct the collection $\{T_1,\dots,T_K\}$ we proceed as follows.  Let $\X=L^2[0,1]$ and for $f,g\in\X$ their tensor product $f\otimes g$ is the operator
\[
(f\otimes g)(h)
=\innprod fhg
=\left( \int_0^1f(t)h(t)dt \right) g \in \X,
\qquad h\in \X.
\]
If $t$ is the parameter of the functions, we write $f(t)\otimes g(t)$ to mean $f\otimes g$.  With this notation set
\begin{equation}\label{eq:generativeModel}
T_j = k^{-1}\sum_{n=1}^\infty \delta_n^{(j)} \sin(2n\pi t - \theta^{(j)}) \otimes \sin(2n\pi t - \theta^{(j)})
, \quad j\in\{1,\dots,K\}
,\quad \delta_n^{(j)}\stackrel{iid}{\sim} \chi^2_k,
\end{equation}
where $\delta_n^{(j)}$ are independent of $\theta^{(j)}$, and $k>0$.  This construction guarantees that $\mathbb E [T_j]=\mathbb E[\mathbb E[T_j|\theta^{(j)}]=\mathscr I$ regardless of the distribution of $\theta^{(j)}$.  The parameter $k$ controls the concentration of $T^{(j)}$ around the identity;  when $k$ is large, the law of large numbers entails that $\delta_n^{(j)}$ is close to 1.  Of course, a given realisation of $T_1,\dots,T_K$ will not average precisely to the identity, but will average approximately to the identity if $K$ and $k$ are not too small.  The parameter $\theta_i\geq 0$ on the other hand, serves as indicators on how far
  we are
  from commutativity. On this note, a parametric model can be assumed for the $\theta_i$. We
chose the $\theta_i$ to be
sampled from a von Mises distribution with mean $0$ and measure of concentration
$1/\sigma$, with the degenerate case of $\sigma\to\infty$ yielding commutativity. 

We then generate $n_j$ Gaussian curves $X_{i,j}$, $i\in\{1, \dots, n_i\}$ with mean zero and covariance $\Sigma_j=T_j\bar\Sigma T_j^\star$, $j\in\{1,\dots,K\}$, $j\in\{1,\dots,K\}$.  Inspired from \citet{kashlak2019inference}, the ``population" Fr\'echet mean was chosen to be a matrix with eigenvalue decay rate $O(n^{-4})$:
\begin{equation}\label{eq:origin}
\bar\Sigma
=U\left[
\sum_{n=1}^\infty n^{-4} \sin(2n\pi t) \otimes \sin(2n\pi t)\right]
U^*
\end{equation}
where $U$ is a randomly generated orthogonal operator. To simulate a functional case and have enough information to display the decay of the spectrum, we chose for the matrices a relative high size of 50$\times$50. The power is estimated from 500 replications. The
number of permutations is 200. At each replication, we generate two optimal maps $T_1$
and $T_2$ via the generative model, and two corresponding covariances
$\Sigma_1=T_1\Sigma T_1^*$ and $\Sigma_2 =T_2\Sigma T_2^*$,
  with $\Sigma$ given by equation~\eqref{eq:origin}. For each
  $\Sigma_i$, $i=1,2$, we sample
  40 observations of a Gaussian process with mean-zero and covariance
  $\Sigma_i$. The empirical covariance computed from these
  observations will yield a replica of  $\Sigma_i$.
We repeat this as to obtain $k_1$ replicas of $\Sigma_1$, and $k_2$ replicas of
    $\Sigma_2$, for a total of $k_1+k_2$ covariances divided into two groups of size
$k_1$ and $k_2$ respectively. The values of the pair $(k_1,k_2)$ are
(1,2), (1,3), (1,7) and (4,4). This procedure is repeated for several
values of the Von Mises parameter, namely $\sigma^{-1}=(0.1,1,5,10)$. \\

We compare the power of our procedure with that of the $K$-sample permutation test in \citet{cabassi2017permutation, kashlak2019inference,hlavka2022functional}.   The idea in \citet{cabassi2017permutation} is to perform a series of partial $2$-sample tests for each pair of groups, and combine the pairwise test statistics through the non-parametric combination algorithm of \citet{pesarin2010permutation}.  The pairwise test statistics are $T_{ij}=d(\Sigma_i,\Sigma_j)$, where $\Sigma_i$ and $\Sigma_j$ are the sample covariance operators of the corresponding groups, and $d$ is a metric on the operators space.  The method of \citet{cabassi2017permutation} is general, as any distance $d$ can be used as test statistic.  It is shown to be more powerful than competing method such as \citet{kashlak2019inference}, and is implemented in the R-package \texttt{fdcov} (\citet{cabassi2016fdcov}).  In our comparisons, we have used the square root distance $\hs \Sigma_i^{1/2} - \Sigma_j^{1/2}\hs_2$, which led to the best performance according to \citet{cabassi2017permutation}.  Using the Procrustes distance instead of the square root distance had similar performance but increased computational cost. \citet{hlavka2022functional} propose a a test statistics of Cram\'er-von Mises type with the distance of the empirical characteristics functionals (ECFs) $\int|\phi_1(\omega)-\phi_2(\omega)|^2\textrm{d}Q(\omega)$, where $\phi_i$ is the ECF of the sample and $Q$ is some probability measure on the dual space of $\mathcal{H}$. The performance of the proposed test by \citet{hlavka2022functional} relies heavily on the choice of the measure $Q$. We consider a Gaussian measure characterized by the choice of two different covariance operators: the sample covariance matrix on one hand and an approximation of the inverse of the pooled sample covariance matrix computed from the first 9 eigenvectors on the other. These two cases appear to be performing the best in \citet{hlavka2022functional}. We include both choices in our simulation study because the first yields overall better performance, but the latter gains power in difficult cases, like when most operators are equals. 
After performing the global test, in case the null hyphothesis $H_0$ is rejected, one can investigate pairwise differences with post-analysis comparison, as in \citet{cabassi2017permutation,pesarin2010permutation}.

Figure \ref{fig:ksample-generative} shows the empirical power of all procedures.  The $x$-axis represents the value of the dispersion parameter $k$ in \eqref{eq:generativeModel}, while the $y$-axes displays the empirical power.  It is evident from the figure that our procedure over-powers that of \citet{cabassi2017permutation, kashlak2019inference} as well as of \citet{hlavka2022functional} for both variations considered. 
\begin{center}
\begin{figure}
\includegraphics[width=8cm]{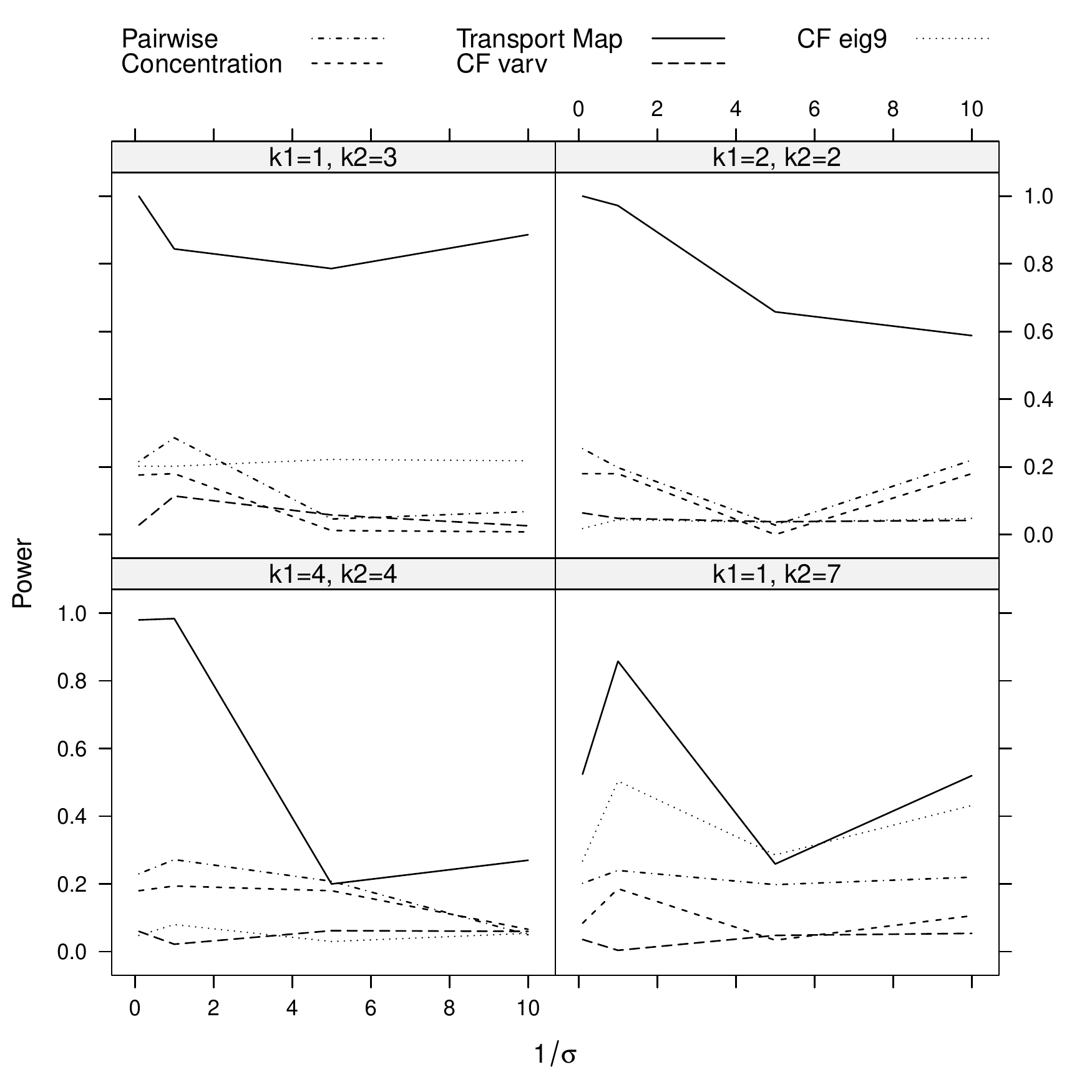}
\caption{Empirical power of different tests as a function of the dispersion parameter $k$ in \eqref{eq:generativeModel}.  The dotted horizontal line represents the nominal level $\alpha=0.05$. }
\label{fig:ksample-generative}
\end{figure}
\end{center}
The success of our test is a genuinely \emph{functional} phenomenon and indeed, when the data are truncated, the differences are not so overwhelming.  To understand this better notice that, since $\chi^2_k$ is an unbounded random variable we have $\hs T_j\hs_\infty=\infty$ almost surely, and our test based on $\hs T_j - \mathscr I\hs_2=\infty$ consequently rejects the null hypothesis.  If the series is truncated at a finite level $n_0$, then $\hs T_j\hs_1\sim k^{-1}\chi^2_{k,n_0}$ and so $P(\hs T_j\hs_1>R)$ decays to zero exponentially as $R$ and/or $k$ increase;  a fortriori the same holds for $\hs T_j\hs_2$ and $\hs T_j\hs_\infty$.  The procedure of \citet{cabassi2017permutation} effectively puts very small weights on what happens at the tails ($n$ large), whereas our procedure is able to detect departures from the null even when they only occur at very high frequencies. We refer to Section \ref{sec:conclusions} for more insight on the power of the functional ANOVA test. 

One may argue that the generative model setup in \eqref{eq:generativeModel} artificially favours our testing procedure, as it guarantees $\hs \Delta_j\hs=\infty$.  We therefore consider another simulation scenario, directly taken from \citet{cabassi2017permutation}, in order to compare the two methods on the same ground.  Here $n=20$ curves generated from a mean-zero Gaussian process with suitably chosen covariances are evaluated on a equipaced grid of 31 points on $[0,1]$. Such curves are assumed to come from $K$
populations, and the covariance operators of each population is obtained via perturbations of some given, known covariances $\Sigma_f$ and $\Sigma_m$.  
These are computed with the \texttt{fda} \textsf{R}-package as the covariance operators of the smoothed growth curves of male and female subjects in the Berkeley growth data set (\citet{jones1941berkeley}).  \citep{jones1941berkeley} The perturbations take two different forms:
\begin{enumerate}
\item \emph{geodesic perturbations:} $K_1<K$ of the groups have covariance operator
\[
\Sigma(\gamma)=[\Sigma_m^{1/2}+\gamma(\Sigma_f^{1/2}R-\Sigma_m^{1/2})][\Sigma_m^{1/2}+\gamma(\Sigma_f^{1/2}R-\Sigma_m^{1/2})]^*
\]
with $R$ the operator minimising the procrustes distance and $\gamma\in[0,5]$.  The other $K_2=K-K_1$ groups have covariance operator $\Sigma_m$.

\item \emph{additive perturbations:} $K_1<K$ of the groups have covariance operator $\Sigma(\gamma)=(1+\gamma)\Sigma_m$, $\gamma\in[0,5]$.  The other $K_2=K-K_1$ groups have covariance operator  $\Sigma_m$. 
\end{enumerate}

The number of permutation is again 200. The power is estimated from a total of 500 replications. The test-statistics employ the Hilbert-Schmidt norm ($r=2$). The probabilities of false positive (I type error) are estimated using all the available replications when $\gamma=0$.
Figure~\ref{fig:power-Gaussian} compares the power of our transport test with that of \citet{cabassi2017permutation, kashlak2019inference, hlavka2022functional} on these synthetic data. The $x$-axis gives the value of the $\gamma$ parameter, while the $y$-axes displays the empirical power. It is seen that our method is more powerful in all scenarios considered. Moreover, in case of geodesic perturbations, we achieve near perfect power, as opposed to the other tests that have little to nearly no power, for small values of $\gamma$, i.e. against local alternatives. Furthermore, notice that without knowing the null distribution is not possible to use the calibration procedure of \citet{kashlak2019inference}, which is too conservative and does not respect the nominal level of $0.05$ under $H_0$.

\begin{center}
\begin{figure}
  \begin{minipage}{0.49\linewidth}
\includegraphics[width=\textwidth]{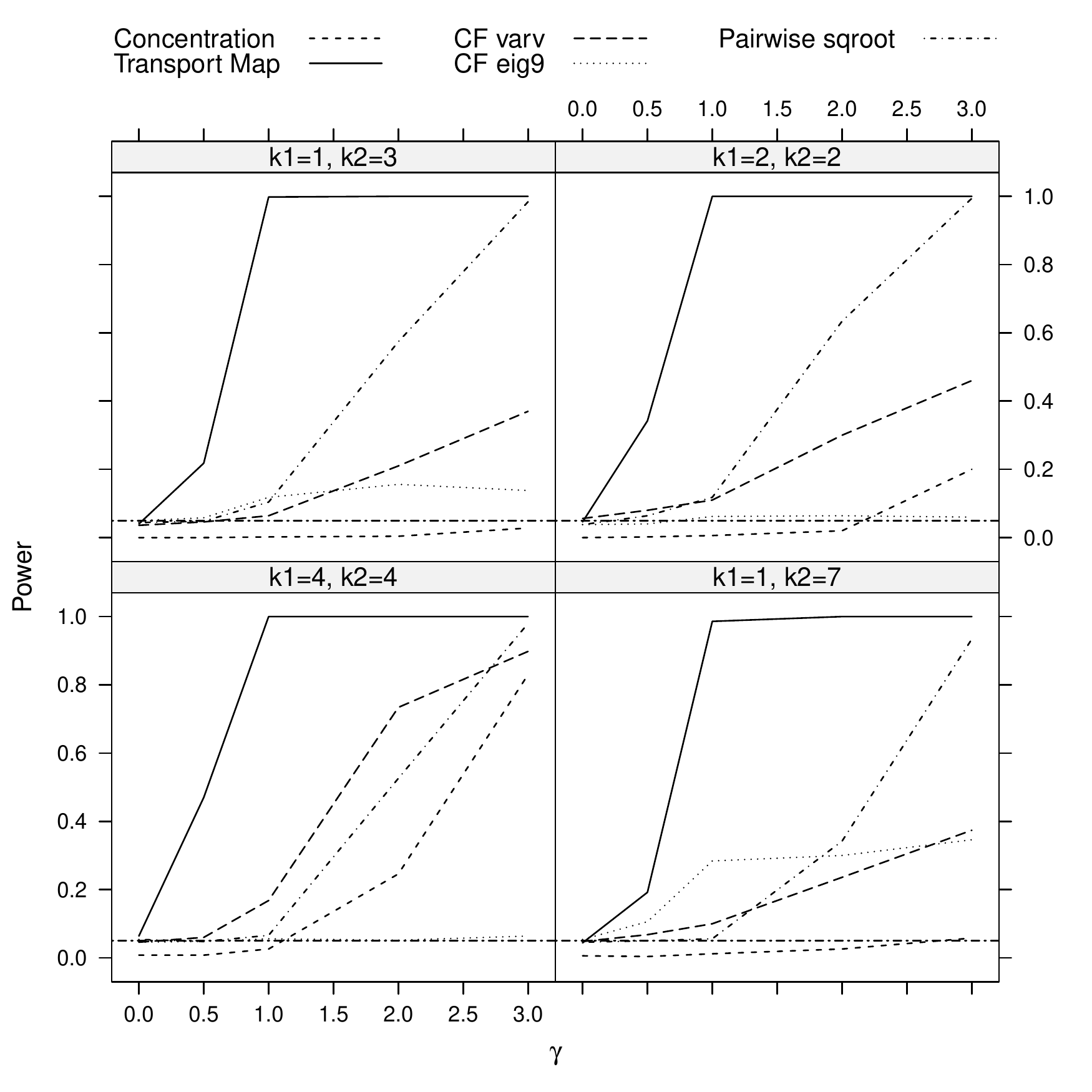}
\end{minipage}
  \begin{minipage}{0.49\linewidth}
\includegraphics[width=\textwidth]{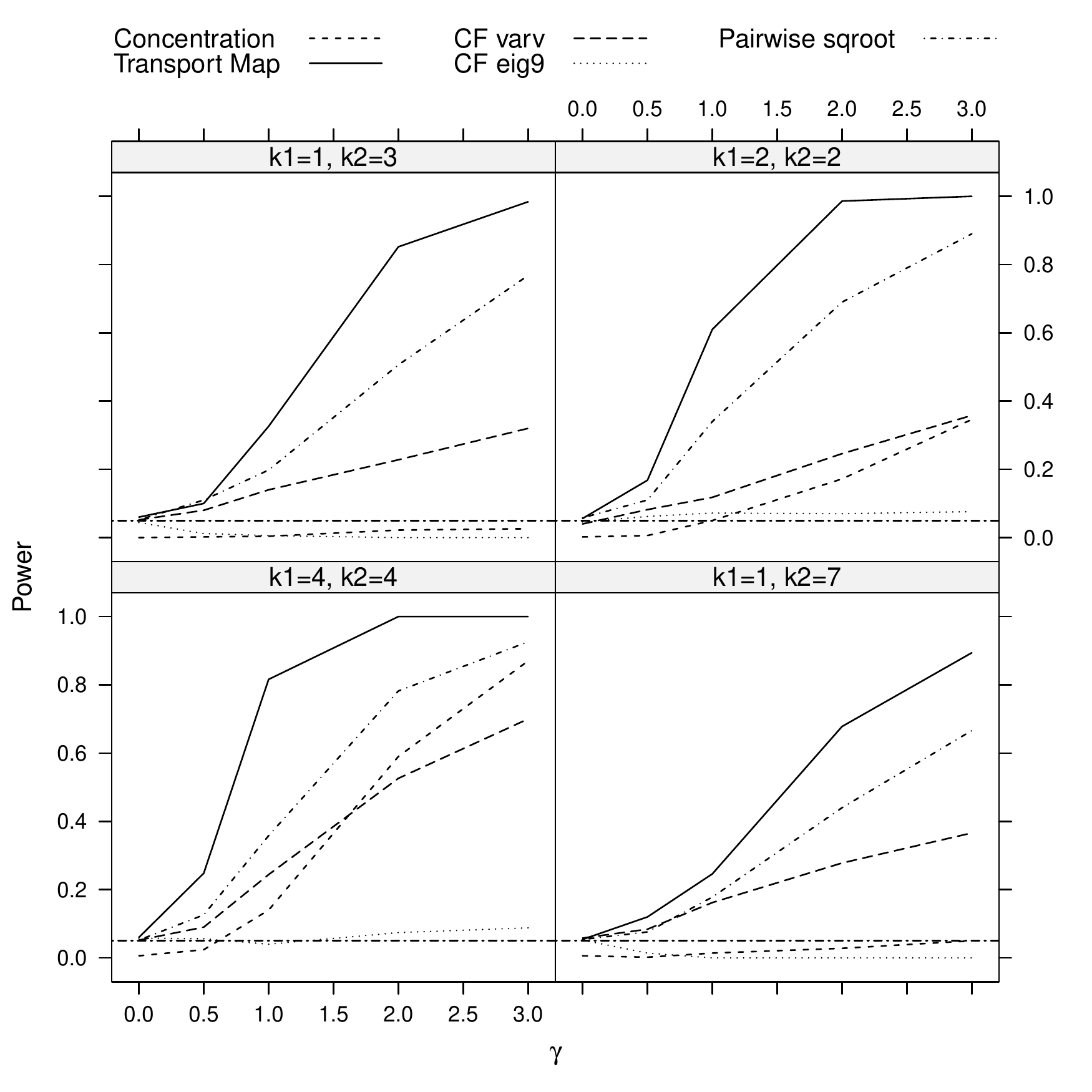}
\end{minipage}
\caption{Empirical power of ours and \citet{cabassi2017permutation, kashlak2019inference, hlavka2022functional}'s method in the Gaussian case, as a function of the perturbation parameter $\gamma$.  Left: geodesic perturbations;  right:  additive perturbations.}
\label{fig:power-Gaussian}
\end{figure}
\end{center}

\subsection{Tangent space PCA}\label{subsec-tgPCA}
We validate the PCA framework on a synthetic datasets which is inspired by the theoretical
generative model \eqref{eq:generativeModel}  and which yields $N$ covariances well separated in
$K$ groups. The aim is to see whether PCA is able to differentiate between
the groups. The operators $\Sigma_1, \dots, \Sigma_K$ are obtained as a conjugation perturbation of some known Fr\'echet mean by the generated ``optimal'' maps $T_1, \dots, T_K$
\[T_i= \sum_n \delta_n^{(i)} \sin(2n\pi t - \theta^{(i)}) \sin(2n\pi t - \theta^{(i)})\]
where the $\delta_n^{(j)}$ are drawn from a  $\chi^2$ distribution and $\theta^{(i)}$ are sampled from a von Mises distribution of mean $0$ and measure of concentration
$1/\sigma$.
The Fr\'echet mean is chosen to be
$
\bar\Sigma
=U \Lambda
U^*
$
as in \citet{kashlak2019inference}, with $U$ being a randomly
generated unitary operator, and $\Lambda$ a $d\times d$ diagonal
matrix with eigenvalue decay of $O(d^{-4})$, $d$ being the dimension
of the matrices.\\
As the generative model yields optimal maps
which are small perturbations of the identity, the dimension of
the matrices used to approximate the operators needs to be large,
otherwise the estimation errors would overwhelm the intrinsic
variability of the sample.
The dimension is chosen to be 200, the measure of concentration to be
1 and the number of groups $K$ to be $K=3$. For each of the $\Sigma_j$, $j=1,2,3$, we generate 100
samples of 50 Gaussian curves each. We then estimate the empirical
covariance of these curves, obtaining a sample of $N=300$ covariances. Results of the PCA are shown in
Table~\ref{importanceofcomponents_gen} and Figures~\ref{pcapairs_gen}. The Figures show that the different groups are clearly identified. 

\begin{table}
\centering
\begin{tabular}{rrrrrrrr}
  \hline
 & PC1 & PC2 & PC3 & PC4 & PC5 & PC6\\ 
  \hline
Standard deviation & 0.5810 & 0.1545 & 0.0966 & 0.0371 & 0.0276 & 0.0216 \\
  Proportion of Variance & 0.8989 & 0.0635 & 0.0248 & 0.0037 & 0.0020 & 0.0012 \\ 
  Cumulative Proportion & 0.8989 & 0.9624 & 0.9873 & 0.9909 & 0.9930 & 0.9942 \\ 
   \hline
\end{tabular}
\caption{Importance of each PC, first experiment with the
  generative model.}\label{importanceofcomponents_gen} 
\end{table}
\begin{center}
\begin{figure}
 \begin{minipage}{0.49\linewidth}
\includegraphics[width=\textwidth]{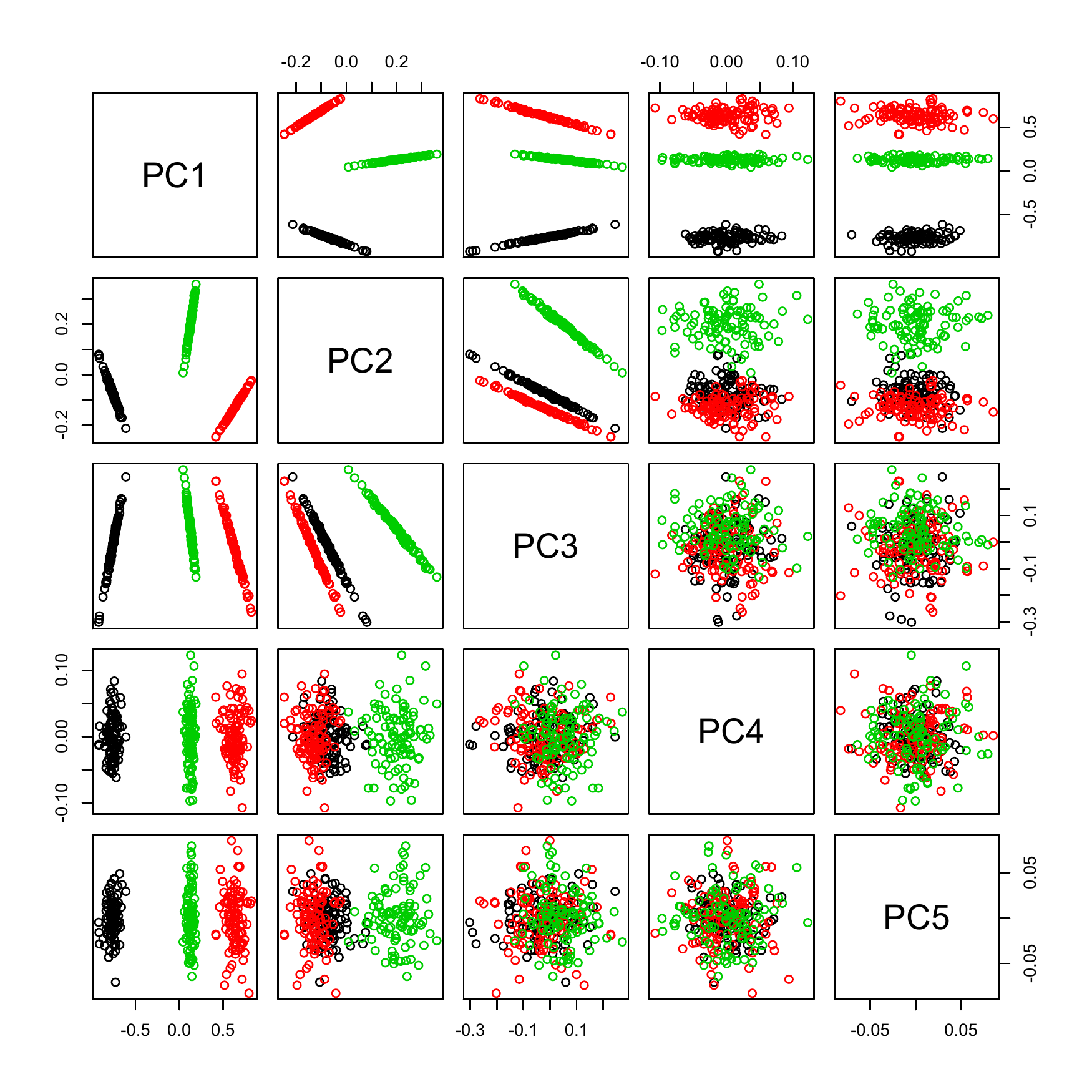}
\end{minipage}
\begin{minipage}{0.49\linewidth}
\includegraphics[width=\textwidth]{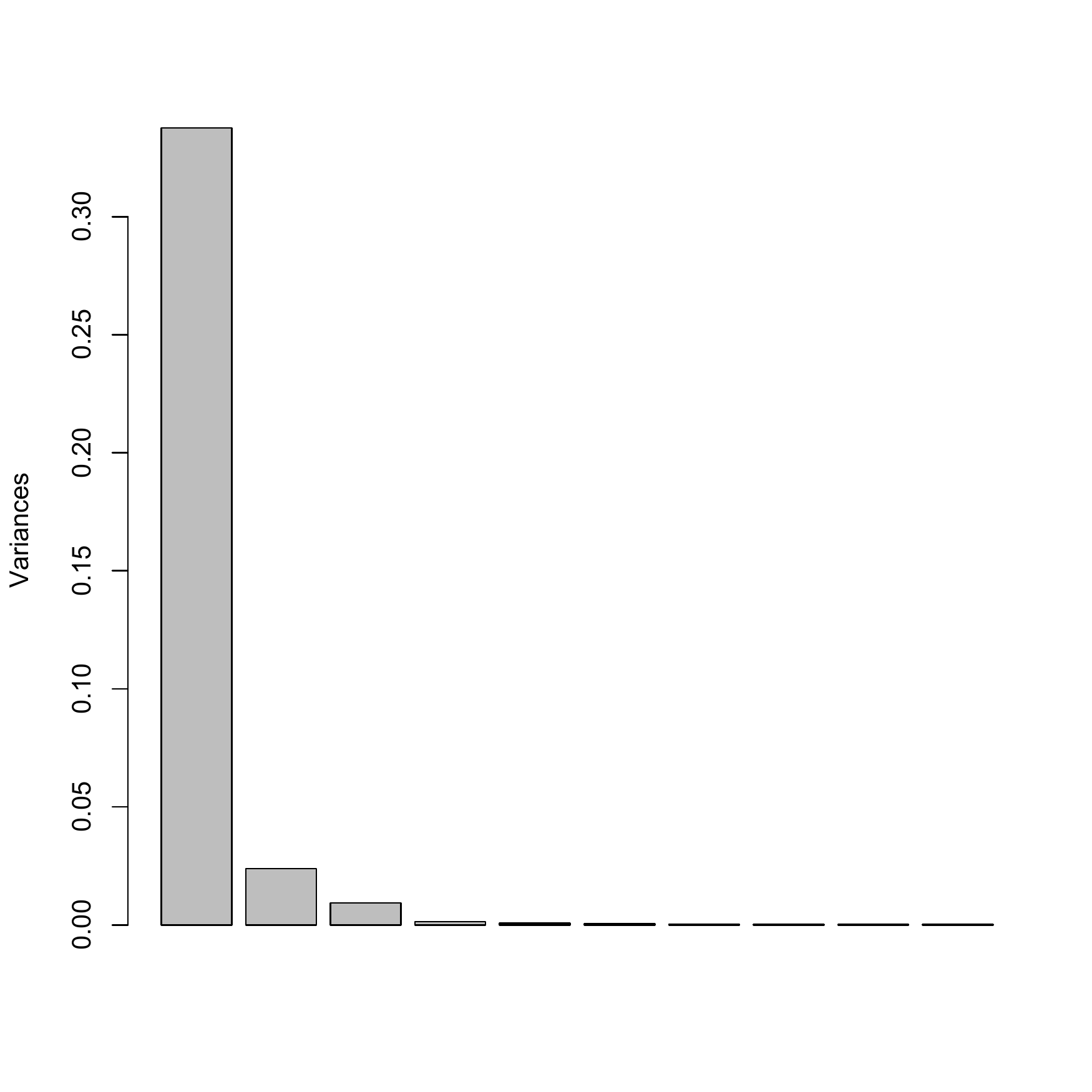}
\end{minipage}
\caption{Left: PCA scores, first experiment with the generative
  model. Colours correspond to the three maps generated from the model. Right: eigenvalues screeplot, first experiment with the generative model}\label{pcapairs_gen}
\end{figure}
\end{center}

\section{Data Analysis:  Phoneme Periodograms}\label{sec-dataanalysis}
In this section, we illustrate our method on the phoneme data set considered in \citet{hastie1995penalized}. The dataset consists of 4509 log-periodograms of length 256 each, computed from continuous speech frames of 50 male speakers. Each speech frame is 32msec long, sampled at a rate of 16kHz and represents one of the following five phoneme: ``aa'' (as in ``dark'', nasal a), ``ao'' (as in  ``water''), ``iy'' (as in ``she''), ``sh'' (as in ``she''), ``dcl'' (as in ``dark'', ``british'' d).  Each phoneme $j$ gives rise to a covariance operator $\Sigma_j$.
We use this sample of $K=5$ covariances to generate $K$ populations of $n$ Gaussian processes, on which inference will be performed.

\subsection{ANOVA}\label{subsec:anova-phoneme}
In order to perform ANOVA on the phoneme dataset, we extract the
log-periodograms corresponding to the phonemes ``aa'', ``ao'', ``iy'' . We limit the test to these three phonemes because of their similarity, which makes it harder to distinguish them and allows for better discrimination among different procedures. 
If we include all 5 phonemes, the difference between vowels and consonants sound is so stark that all tests have very high
power.

To sample under $H_0$, we sample $3n$ log-periodograms for the "iy"
phoneme which are then randomly
assigned to three groups, each of size $n$. To sample under the
alternative $H_1$, we sample $n$ log-periodograms for each phoneme.
We repeat the test for $n=25$ and $n=50$, and for 500 replications and
200 permutations. 
Again we compare both with \citet{cabassi2017permutation},
\citet{kashlak2019inference} and \citet{hlavka2022functional}. We limit the comparison with \citet{hlavka2022functional} to the test statistics using the sample covariance matrix, as it is the scenario that gives consistently better results. We extend their two-sample testing to 3-samples by considering all pairwise comparisons and using the maximum test statistics. When interpreting the results, it is important to treat the outputs
carefully, since the procedure of \citet{kashlak2019inference} was unable
to produce a result in a small number of cases, as the computation of the
distance using SVD failed, while in some cases of \citet{hlavka2022functional}, both the test statistics under the null and the p-values are identically 0. 

Table \ref{tab:ktest-phonema} shows the comparison between the test on smoothed phonema log-periodograms.  Since the different phonemes have different mean functions, observations must be
centered around the sample mean of each group, implying that the right type I error probability under
$H_0$ might not be respected. Regardless, the transport test delivers a level very close to
the nominal 0.05, especially when $n=50$ (which is still
relatively low compared to the 256 points where the curves are
sampled). The tests of \citet{cabassi2017permutation} and \citet{hlavka2022functional} also reache an acceptable nominal level under $H_0$. This is not the case for \citet{kashlak2019inference} making impossible the comparison. However, \citet{cabassi2017permutation} show that their test outperforms that of \citet{kashlak2019inference}.  
It is worth mentioning that the test of \citet{hlavka2022functional} responds very well to variation of the mean, because it takes into account the full distribution thanks to the characteristic functionals. In the phoneme dataset, the mean log-periodogram captures most of the variability. Thus, if one was to consider variation with respect to both first and second moment simultaneously, the method of \citet{hlavka2022functional} would show a significant increase in power.  However, for the scope of this work, we are interested specifically in second order variation. When centering with respect to the mean, the test based on the transport maps greatly outperforms the competing methods under the alternative
hypothesis.

\begin{table}[t]
\centering
\begin{tabular}{rrrrrr}
  \hline
& $n$ & Pairwise & Concentration & Transport Maps & ECF-varv\\ 
  \hline
 $H_0$& 25 & 0.086 & 0.000 & 0.068 & 0.2\\ 
  &  50 & 0.050 & 0.000 & 0.046 & 0.5\\
 $H_1$& 25 & 0.271 & 0.300 & 0.470 & 0.01\\
  &  50 & 0.670 & 0.944 & 0.994 & 0.1\\
   \hline
\end{tabular}
\caption{Comparison of the empirical power of the three different testing methods on the
  \texttt{phoneme} dataset, when applied on the phonemes "aa", "ao" and "iy".
}\label{tab:ktest-phonema}
\end{table}

\subsection{PCA}\label{subsec:pca-phoneme}
In this section, we illustrate the use of tangent space PCA of covariance operators by applying it to the phoneme dataset described in \citet{hastie1995penalized}. The collection of curves corresponding to each phoneme gives rise to a sample covariance operator, for a total of five covariances. The five empirical covariances are lifted to the tangent space via the log map centered at their Fr\'echet mean $\bar\Sigma$. Successively they are scaled by $\bar\Sigma^{1/2}$ as explained in section \ref{subsec:transport_pca}. Standard PCA can now be run on these quantities. 
Figure \ref{fig:PCA-phonemas} (left) shows the results of applying PCA on \texttt{phoneme} data. We see clearly that tangent space PCA captures very well the difference among the phonemes, as each sillable is isolated in at least one plot. The colours are as follows: ``sh'' black, ``iy''
  red, ``dcl ``green'', ``aa'' blue, ``ao'' cyan, more precisely:
 \begin{enumerate}
  \item The first PC captures (part of) the difference between ``aa, ao and iy" (vowels) and ``dcl and sh" (consonants)
  \item   The second PC captures (part of) the difference between ``dcl" and ``sh" (two consonants).
  \item   The third PC captures (part of) the difference between ``aa and ao" and ``iy" (separating the two similar sounding vowels from the third more different one).
  \item   The fourth PC captures (part of) the difference between ``aa" and ``ao" (separating the last two remaining, and very similar, sounds).
  \item    Since, the order in the $y$-axes is the order of magnitude of the eigenvalues the analysis suggests also the importance of the differences between the operators. As intuition dictates the difference between vowels and consonants is nearly four times more pronounced than the difference between the sounds ``aa" and ``ao".
\end{enumerate}
The screeplot (Figure~\ref{fig:PCA-phonemas}, right) shows that four PCs explain the full variance of the data, which is obvious as we have only five data points.  The fourth PC is quite important and explains 13\% of the variance.

In order to test our methodology in a more realistic situation, we artificially enlarge our sample of covariances by subsampling the original data.  Specifically, from each of the five phonemes we subsample $B=50$ of the corresponding log-periodograms to obtain a new estimator of the covariance operator of that phoneme.  We do this $G=12$ times so that in total we have a sample of $5G=60$ covariances, divided into five groups, and the covariances in a group should be close to each other.  We then carry out the PCA on these 60 covariances
 The results are showed in Figure~\ref{fig:PCA-phonemas} (right). Again, we can see that each phonema is isolated in at least one plot. Figure \ref{fig:pcacomparison} shows the comparison of the PCA scores both in Euclidean and in Wasserstein distance.  It is seen that the PCA based on the Procrustes tangent space distance is much more successful in distinguishing the covariances of different phonemes.

\begin{center}
  \begin{figure}
    \begin{minipage}{0.49\linewidth}
\includegraphics[width=\textwidth]{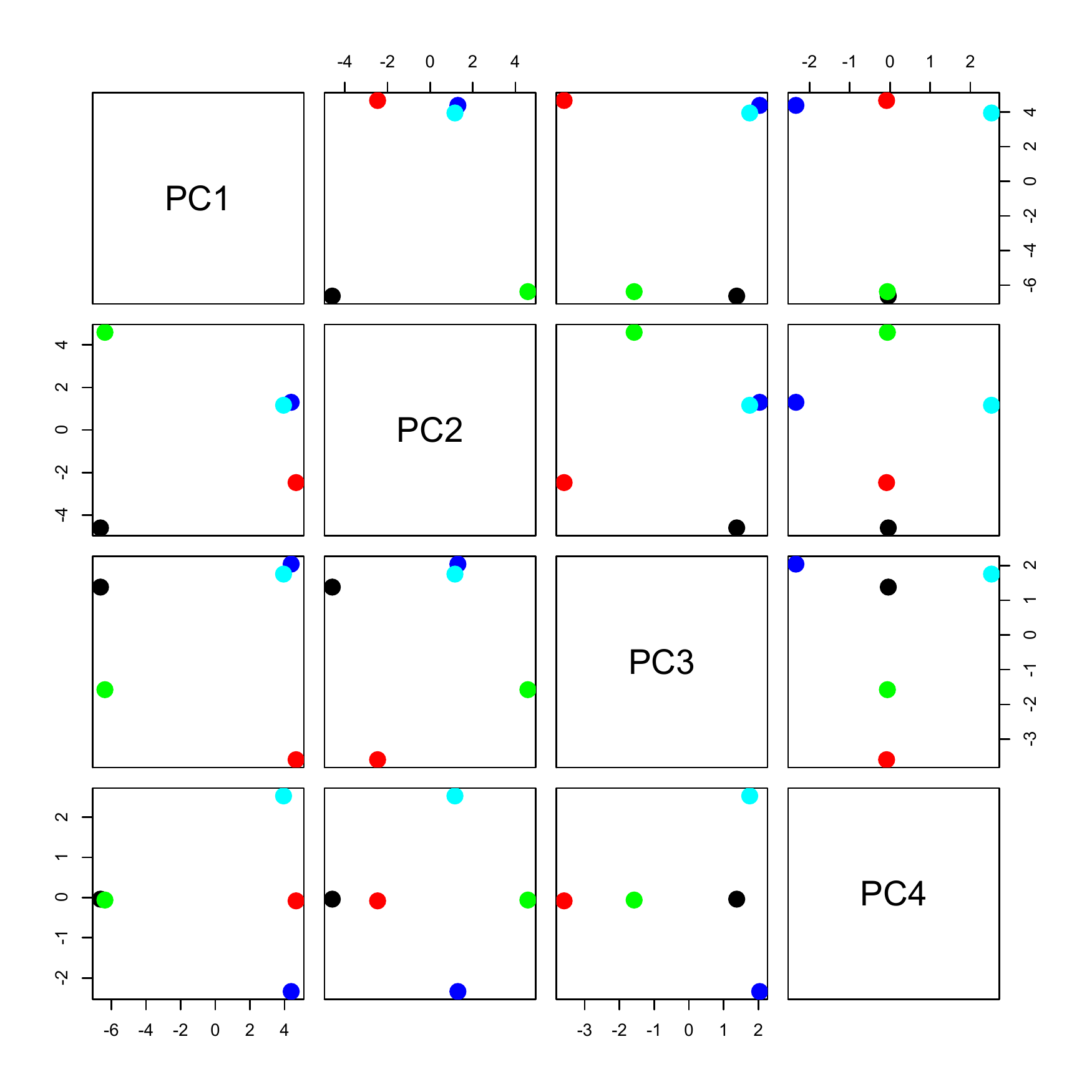}
\end{minipage}
\begin{minipage}{0.49\linewidth}
  \includegraphics[width=\textwidth]{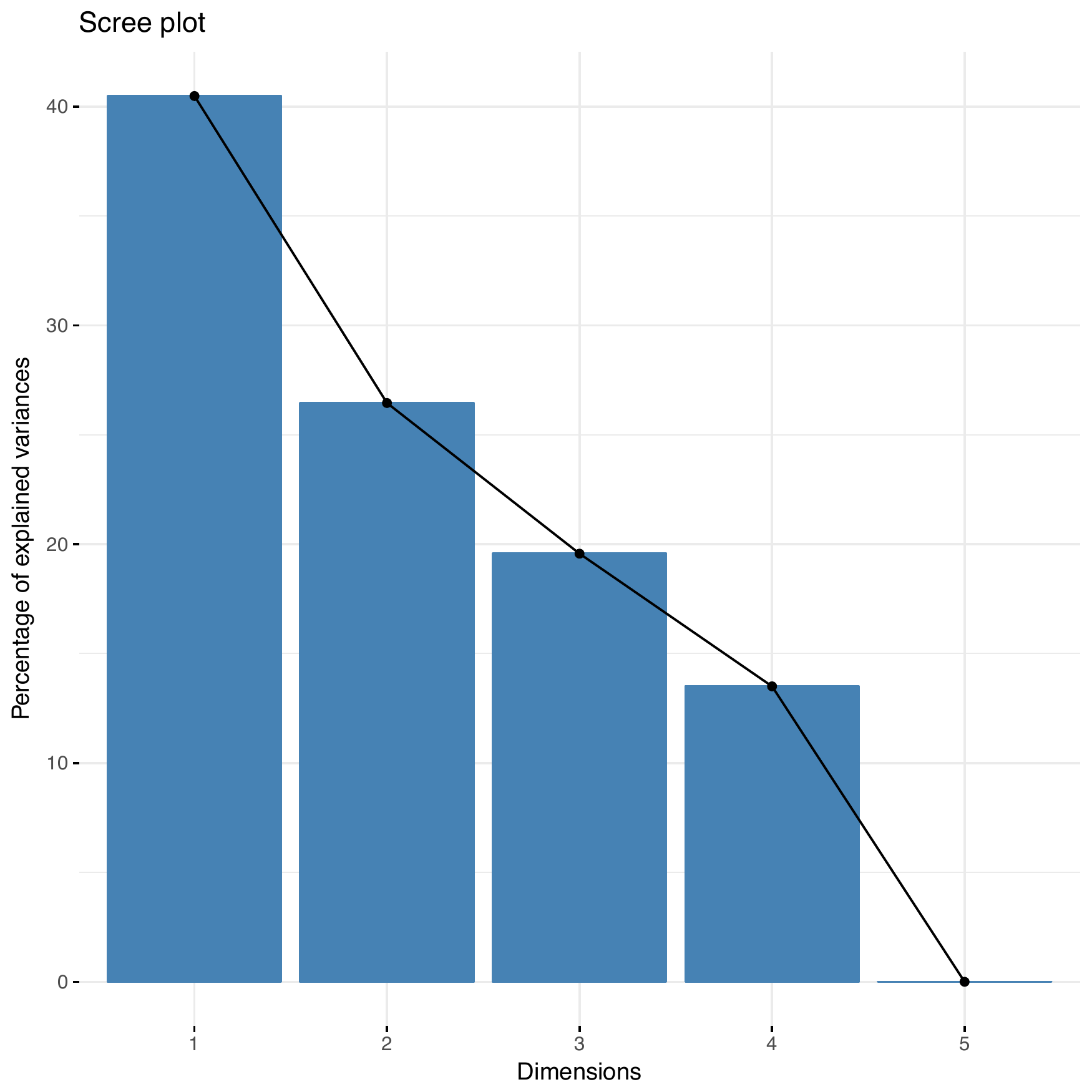}
  \end{minipage}
\caption{Left:  PCA scores, as computed from the phoneme dataset. The colours are as follows: ``sh'' black, ``iy''
  red, ``dcl ``green'', ``aa'' blue, ``ao'' cyan.  Right: screeplot of eigenvalues, phoneme dataset}
\label{fig:PCA-phonemas}
\end{figure}
\end{center}
\begin{center}
  
  \begin{figure}
    \begin{minipage}{0.49\linewidth}
      \includegraphics[width=\textwidth]{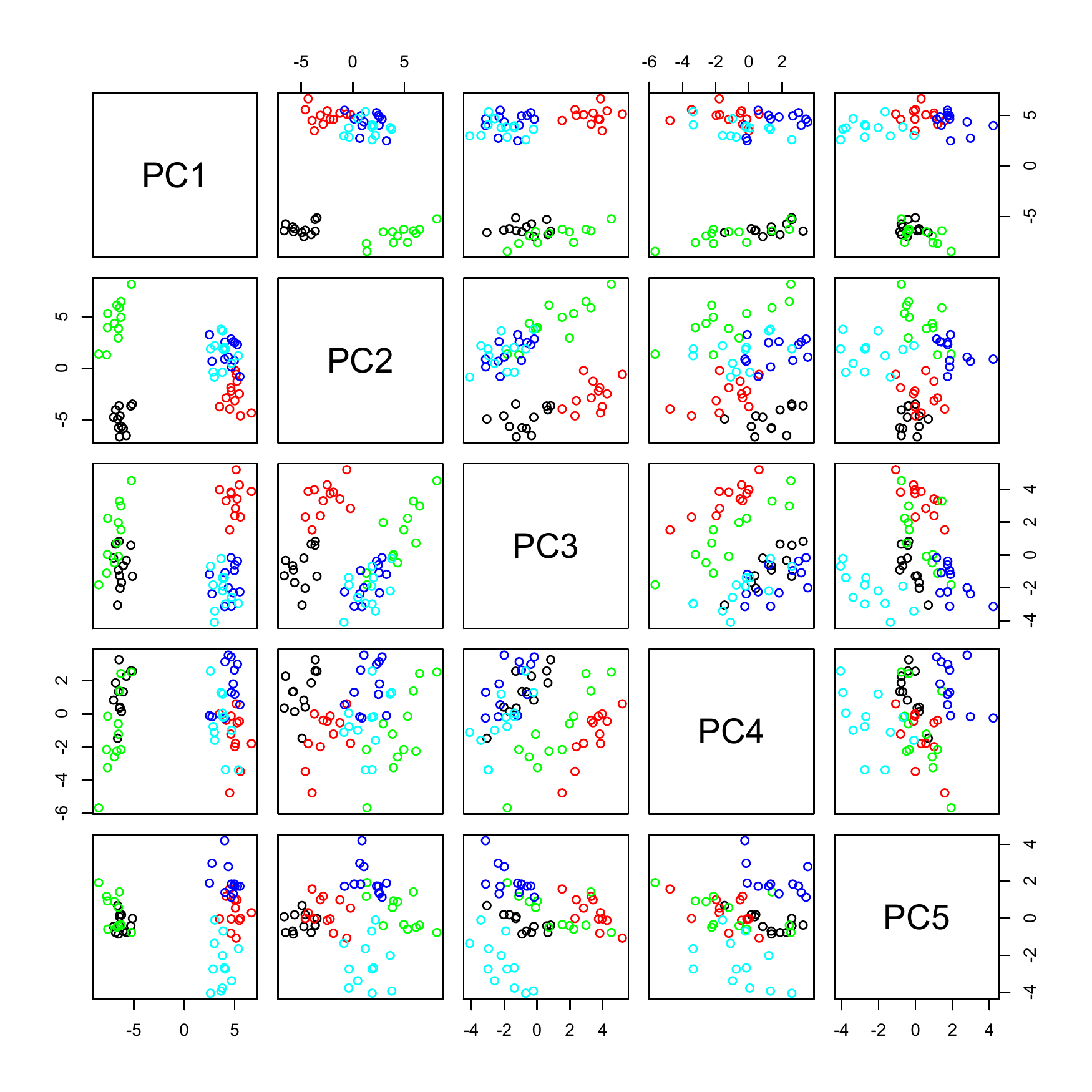}
\end{minipage}
\begin{minipage}{0.49\linewidth}
\includegraphics[width=\textwidth]{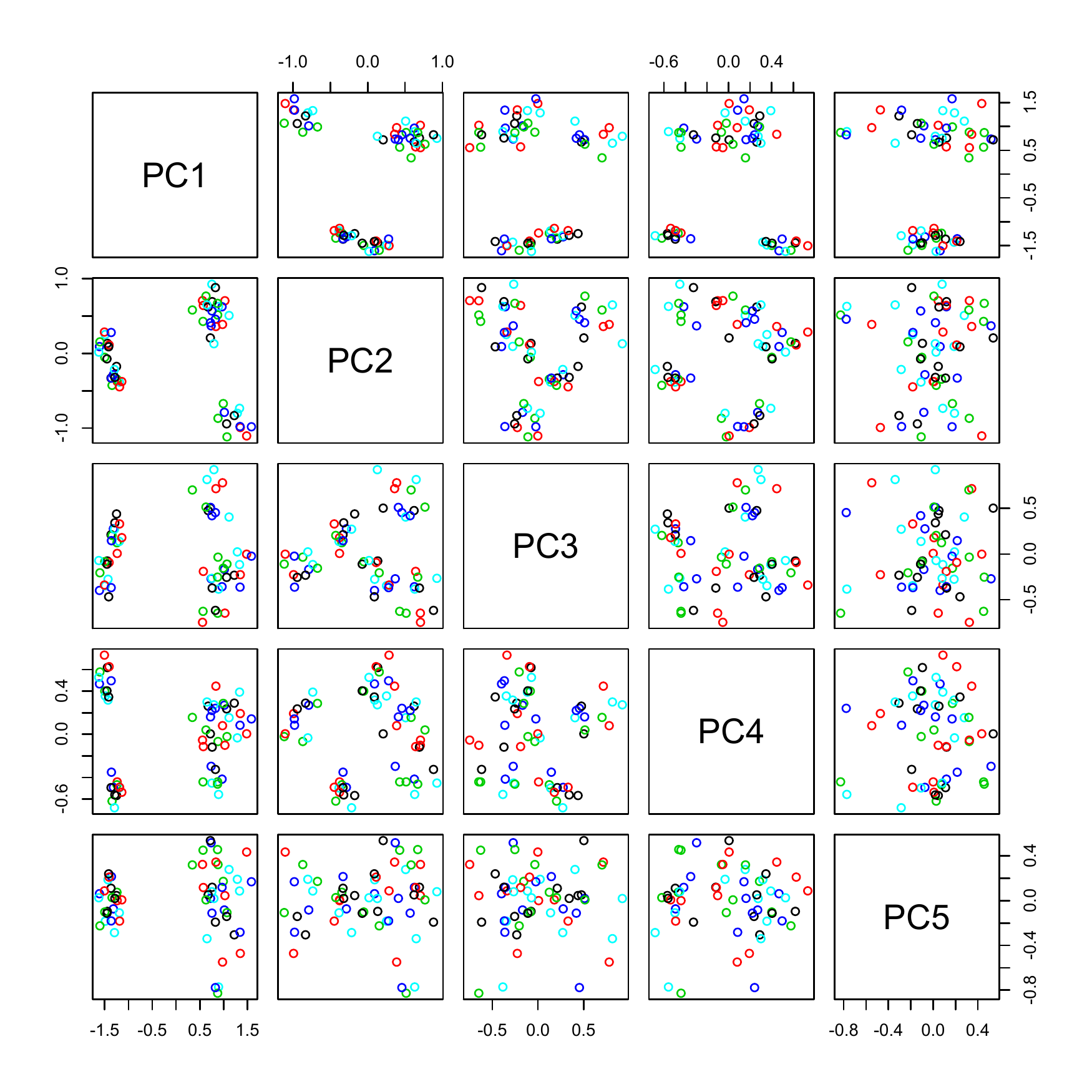}
\end{minipage}
\caption{PCA of the 60 covariance operators, based on the Procrustes tangent space distance (left) and the Hilbert--Schmidt distance (right)}
\label{fig:pcacomparison}
\end{figure}
\end{center}

\section{Concluding Remarks }\label{sec:conclusions}
This paper introduces a framework that allows the comparison of several population of stochastic processes with respect to their covariance structure. 
We contributed a new methodology that exploits the theory of optimal (multi)transport and demonstrate how taking such a stand point allows to develop: (a) a testing procedure which outperforms the state of the art and (b) the first instance of tangent space principal component analysis of covariance operators. A fundamental ingredient of our approach and the main theoretical contribution of this paper is the proof that Gaussian measures can always be multicoupled through bounded non-negative linear operators. The existence and boundedness result is elemental to both the testing procedure, and the PCA: the test statistic compares these coupling maps to the identity in norms of various strenghts, whereas the PCA uses the deviations of these couplings from the identity as a basis for eigenanalysis. 

Specifically concerning the testing procedure, an appealing aspect of the presented methodology is that it harnesses a genuinely functional effect, in order to manifest exceptionally powerful performance under wide classes of alternatives -- -- alternatives for which previous tests would not perform nearly as well.

The genuinely functional effect arises under alternatives where the optimal coupling maps are \emph{not} Hilbert--Schmidt (they are merely guaranteed to be bounded). At the population level, this corresponds to $\hs T_j - \mathscr I\hs_2=\infty$, yielding very large values of the empirical test statistic. 

Such a situation arises when departures from the null happen ``across the whole spectrum" and not just in its bulk. These situations are not a theoretical curiosity ---they can indeed be very common, as the following example will illustrate.

Consider the two-sample setting, and suppose that $\Sigma_1$ is the covariance operator of standard Brownian motion on $[0, 1]$, whereas $\Sigma_2=\sigma^2 \Sigma_1$ is the covariance of standard Brownian motion on [0,1] scaled by the positive scalar $\sigma>0$. These two covariance operators commute with Fr\'echet mean having covariance $(1+\sigma)^2\Sigma_1/4$, and corresponding transport maps $\mathbf t_1=[2\sigma/(1+\sigma])]\mathscr I$ and $\mathbf t_2=[2/(1+\sigma])]\mathscr I$. Thus $\Delta_j$ are both bounded, but they are not Hilbert--Schmidt unless $\sigma=1$, leading to $\hs{\Delta_2}\hs_k=\infty$, for any $k\geq 1$. The Wasserstein distance itself, however, equals $\hs \Sigma_1^{1/2}-\Sigma_2^{1/2} \hs_2=(1-\sigma)\hs \Sigma_1^{1/2}\hs_2= (1-\sigma)/\sqrt{12}$ and thus becomes arbitrarily small as $\sigma$ nears one (recall that in the commutative case, the Wasserstein distance becomes the Hilbert--Schmidt distance of the corresponding positive roots). 

The functional effect taking place is related to the Hajek--Feldman alternative, which has been exploited to obtain perfect discrimination of Gaussian process differing in their mean within an FDA context (see \citet{delaigle2012achieving}), by similarly exploiting the fact that a certain norm diverges under the alternative. 

To see the connection with our setting, assume $\Sigma_j=\sum_n \lambda_{j,n}\varphi_{n}\otimes\varphi_n$ ($j=1,2$) have the same eigenfunctions and thus commute.  Then zero-mean Gaussian measures $N(0,\Sigma_1)$ and $N(0,\Sigma_2)$ are equivalent if and only if $\sum_n(r_n - 1)^2$ converges, where $r_n=\lambda_{2,n}/\lambda_{1,n}$.  Indeed, summability implies that $r_n\to1$ so that $\Sigma_1^{1/2}$ and $\Sigma_2^{1/2}$ have the same range and
\[
\hs(\Sigma_1^{-1/2}\Sigma_2^{1/2})(\Sigma_1^{-1/2}\Sigma_2^{1/2})^*-\mathscr{I}\hs_2 = \sum_{n=1}^\infty (r_n - 1)^2
\]
is finite, as required (see e.g., \citet[Theorem~2.25]{da2014stochastic}).  The Fr\'echet mean and transport maps are
\[
\overline \Sigma = \sum_{n=1}^\infty\left[\frac{\sqrt{\lambda_{1,n}} + \sqrt{\lambda_{1,n}}}2\right]^2 \varphi_n\otimes\varphi_n
,\quad \mathbf t_1
=\frac2{1+\sqrt{r_n}} \varphi_n\otimes\varphi_n,
\quad 
\mathbf t_2
=\frac2{1+\sqrt{r_n^{-1}}}\varphi_n\otimes\varphi_n,
\]
and simple algebra shows
\[
\hs\Delta_1\hs_2<\infty \iff \sum_{n=1}^\infty (r_n - 1)^2 \iff \hs\Delta_2\hs_2<\infty.
\]
Thus, in the commutative case, the population level test statistic is finite if and only if the Gaussian measures are equivalent.  Whether or not this is the case depends on how differences persist across the whole spectrum, rather than just in the bulk.

\section{Proofs of Formal Statements}\label{sec:proofs}
\begin{proof}[Proof of Lemma~\ref{lem:nullHyp}]
If $\Sigma_1=\cdots=\Sigma_K$, then the unique optimal multicoupling is given by the maps $\mathbf{t}_j(z)=z$ and the process $Z\sim N(0,\Sigma_1)$.  Conversely, if a multicoupling of $(\gamma_1,\ldots,\gamma_K)$ is achieved as the law of $(\mathbf{t}_1(Z),\ldots,\mathbf{t}_K(Z))$ for some process $Z$ and some maps satisfying $\mathbf{t}_1=\cdots=\mathbf{t}_K$, then $\gamma_i$, the law of $\mathbf{t}_i(Z)$, is the same for all $i$, i.e., $\gamma_1=\cdots=\gamma_K$, and so $\Sigma_1=\cdots=\Sigma_K$.  Uniqueness of the optimal multicoupling follows from the first sentence in the proof.
\end{proof}
For the proof of Theorem~\ref{thm:bounded_maps}, we need the following result from \citet{douglas1966majorization}.
\begin{lemma}\label{lem:Baker}
Let $0\le A\le B$ be bounded operators, where $A\le B$ means that $B-A$ is non-negative.  Then there exists a bounded operator $G$ with $\hs G\hs_\infty\le 1$ such that $A^{1/2}=B^{1/2}G$ and $\ker G^*\supseteq \ker B$.
\end{lemma}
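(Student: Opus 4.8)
The plan is to recognise this as the square-root case of Douglas's majorization lemma and to construct $G$ by first building its adjoint on the range of $B^{1/2}$. Set $C=A^{1/2}$ and $D=B^{1/2}$. Since $A$ and $B$ are non-negative and self-adjoint, so are their square roots, so that $C=C^*$, $D=D^*$, and $CC^*=A$, $DD^*=B$. The hypothesis $A\le B$ therefore reads $CC^*\le DD^*$, which I would at once recast in the pointwise form $\|A^{1/2}h\|\le\|B^{1/2}h\|$ for all $h\in\X$, using the identities $\|A^{1/2}h\|^2=\innprod{Ah}{h}$ and $\|B^{1/2}h\|^2=\innprod{Bh}{h}$.

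First I would define a contraction $H$ on the (possibly non-closed) subspace $\range(B^{1/2})$ by the rule $H(B^{1/2}h)=A^{1/2}h$. This is where the hypothesis does its work: the pointwise inequality guarantees both that $H$ is \emph{well defined} (if $B^{1/2}h_1=B^{1/2}h_2$ then $\|A^{1/2}(h_1-h_2)\|\le\|B^{1/2}(h_1-h_2)\|=0$, forcing $A^{1/2}h_1=A^{1/2}h_2$) and that $\|H(B^{1/2}h)\|=\|A^{1/2}h\|\le\|B^{1/2}h\|$, so $H$ is linear with norm at most one on $\range(B^{1/2})$.

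Next comes the Douglas extension trick. I would extend $H$ by continuity to the closure $\overline{\range(B^{1/2})}$, preserving the contraction bound, and set $H=0$ on the orthogonal complement $(\overline{\range(B^{1/2})})^\perp=\ker(B^{1/2})=\ker B$, where the last equality uses non-negativity of $B$. This yields a bounded operator $H:\X\to\X$ with $\hs H\hs_\infty\le 1$ that satisfies $HB^{1/2}=A^{1/2}$ on all of $\X$ and annihilates $\ker B$. Setting $G=H^*$ and taking adjoints of $HB^{1/2}=A^{1/2}$, together with $(B^{1/2})^*=B^{1/2}$ and $(A^{1/2})^*=A^{1/2}$, gives $B^{1/2}G=A^{1/2}$ and $\hs G\hs_\infty=\hs H\hs_\infty\le 1$.

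It remains to check the kernel containment, which I would deduce from the range of $G$. Because $H$ annihilates $(\overline{\range(B^{1/2})})^\perp$, taking orthogonal complements yields $\overline{\range(G)}=\overline{\range(H^*)}=(\ker H)^\perp\subseteq\overline{\range(B^{1/2})}=(\ker B)^\perp$, the last equality again by self-adjointness of $B$. Equivalently $\ker B\subseteq(\range G)^\perp=\ker(G^*)$, which is the desired $\ker G^*\supseteq\ker B$. The only genuinely delicate point is the well-definedness of $H$ on $\range(B^{1/2})$, which rests entirely on the pointwise norm domination extracted from $A\le B$; the remaining steps are routine bookkeeping with closures, orthogonal complements, and adjoints, together with the standard facts $\ker B^{1/2}=\ker B$ and $\overline{\range(B^{1/2})}=(\ker B)^\perp$.
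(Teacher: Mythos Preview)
Your proof is correct and is essentially the classical argument for Douglas's majorization lemma. The paper does not actually supply a proof of this lemma; it simply cites \citet{douglas1966majorization} and uses the result as a black box in the proof of Theorem~\ref{thm:bounded_maps}. So there is no ``paper's own proof'' to compare against, and what you have written is precisely the standard construction one finds in Douglas's paper (define the contraction on $\range(B^{1/2})$ via the norm domination $\|A^{1/2}h\|\le\|B^{1/2}h\|$, extend by continuity and by zero, then take adjoints).

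One minor remark: your verification of $\ker G^*\supseteq\ker B$ at the end is correct but slightly roundabout. Since $G=H^*$ and $H$ is bounded, you have $G^*=H$ directly, and you constructed $H$ to vanish on $\ker B$; hence $\ker B\subseteq\ker H=\ker G^*$ in one line. Your detour through $\overline{\range(G)}$ reaches the same conclusion but is not needed.
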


\begin{proof}[Proof of Theorem~\ref{thm:bounded_maps}]
Let $\overline\Sigma$ be any Fr\'echet mean of $\Sigma_1,\dots,\Sigma_K$ and define $Q_i=(\Sigma^{1/2}\Sigma_i\Sigma^{1/2})^{1/2}\ge 0$.  The fixed point equation for Fr\'echet means (\cite[Proposition~16]{masarotto2018procrustes}) yields the inequality
\[
Q_i
\le \sum_{j=1}^K Q_j
=K\overline\Sigma.
\]
By Lemma~\ref{lem:Baker} there exists an operator $G$ with range included in the closed range of $\overline\Sigma$ such that $Q_i^{1/2}=\overline\Sigma^{1/2}G$, $\hs G\hs_\infty \le\sqrt K$ and we may write $G=\overline\Sigma^{-1/2}Q_i^{1/2}$.  If we can identify $G^*$ with $Q_i^{1/2}\overline\Sigma^{-1/2}$, then we can conclude that
\[
\topt i{}
=\topt{\overline\Sigma}{\Sigma_i}
=\overline\Sigma^{-1/2}(\overline\Sigma^{1/2}\Sigma_i\Sigma^{1/2})^{1/2}\overline\Sigma^{-1/2}
=GG^*
\]
is well-defined and bounded, with operator norm bounded by $K$.  Let $y=\overline\Sigma^{1/2}z$ and notice that for all $x$
\[
\innprod{G^*y}x
=\innprod{\overline{\Sigma}^{1/2}z}{Gx}
=\innprod{\overline{\Sigma}^{1/2}z}{\overline{\Sigma}^{-1/2}Q_i^{1/2}x}
=\innprod{z}{Q_i^{1/2}x}
=\innprod{Q_i^{1/2}\overline\Sigma^{-1/2}y}x.
\]
Thus $G^*=Q_i^{1/2}\overline\Sigma^{-1/2}$ on $\range(\overline\Sigma^{1/2})$.  Since $G^*$ is bounded the equality extends to the closure of the range, which is $(\ker\overline\Sigma)^\perp$.  On $\ker\overline\Sigma$ both operators are identically zero.

We have thus established the existence of deterministic optimal maps from the Fr\'echet mean $\overline\Sigma$ to each of the operators $\Sigma_j$.  Now if $Z\sim N(0,\overline\Sigma)$, then $\pi=(\topt1{}(Z),\dots,\topt K{}(Z))$ is a multicoupling of the corresponding Gaussian measures, and the optimality of $\pi$ follows from \citet[Proposition~2]{zemel2019frechet}.
\end{proof}

\bibliographystyle{imsart-nameyear}
\bibliography{fda}

\begin{thebibliography}{42}

\bibitem[\protect\citeauthoryear{Anderson}{2006}]{anderson2006distance}
\begin{barticle}[author]
\bauthor{\bsnm{Anderson},~\bfnm{Marti~J}\binits{M.~J.}}
(\byear{2006}).
\btitle{Distance-based tests for homogeneity of multivariate dispersions}.
\bjournal{Biometrics}
\bvolume{62}
\bpages{245--253}.
\end{barticle}
\endbibitem

\bibitem[\protect\citeauthoryear{Benko, H{\"a}rdle and
  Kneip}{2009a}]{benko:2009}
\begin{barticle}[author]
\bauthor{\bsnm{Benko},~\bfnm{Michal}\binits{M.}},
  \bauthor{\bsnm{H{\"a}rdle},~\bfnm{Wolfgang}\binits{W.}} \AND
  \bauthor{\bsnm{Kneip},~\bfnm{Alois}\binits{A.}}
(\byear{2009}a).
\btitle{{Common functional principal components.}}
\bjournal{Ann. Stat.}
\bvolume{37}
\bpages{1-34}.
\bdoi{10.1214/07-AOS516}
\end{barticle}
\endbibitem

\bibitem[\protect\citeauthoryear{Benko, H{\"a}rdle and
  Kneip}{2009b}]{benko2009common}
\begin{barticle}[author]
\bauthor{\bsnm{Benko},~\bfnm{Michal}\binits{M.}},
  \bauthor{\bsnm{H{\"a}rdle},~\bfnm{Wolfgang}\binits{W.}} \AND
  \bauthor{\bsnm{Kneip},~\bfnm{Alois}\binits{A.}}
(\byear{2009}b).
\btitle{Common functional principal components}.
\bjournal{The Annals of Statistics}
\bvolume{37}
\bpages{1--34}.
\end{barticle}
\endbibitem

\bibitem[\protect\citeauthoryear{Boente, Rodriguez and
  Sued}{2018}]{boente2018testing}
\begin{barticle}[author]
\bauthor{\bsnm{Boente},~\bfnm{Graciela}\binits{G.}},
  \bauthor{\bsnm{Rodriguez},~\bfnm{Daniela}\binits{D.}} \AND
  \bauthor{\bsnm{Sued},~\bfnm{Mariela}\binits{M.}}
(\byear{2018}).
\btitle{Testing equality between several populations covariance operators}.
\bjournal{Annals of the Institute of Statistical Mathematics}
\bvolume{70}
\bpages{919--950}.
\end{barticle}
\endbibitem

\bibitem[\protect\citeauthoryear{Cabassi and Kashlak}{2016}]{cabassi2016fdcov}
\begin{barticle}[author]
\bauthor{\bsnm{Cabassi},~\bfnm{A}\binits{A.}} \AND
  \bauthor{\bsnm{Kashlak},~\bfnm{AB}\binits{A.}}
(\byear{2016}).
\btitle{fdcov: Analysis of Covariance Operators}.
\bjournal{R package version}
\bvolume{1}.
\end{barticle}
\endbibitem

\bibitem[\protect\citeauthoryear{Cabassi et~al.}{2017}]{cabassi2017permutation}
\begin{barticle}[author]
\bauthor{\bsnm{Cabassi},~\bfnm{Alessandra}\binits{A.}},
  \bauthor{\bsnm{Pigoli},~\bfnm{Davide}\binits{D.}},
  \bauthor{\bsnm{Secchi},~\bfnm{Piercesare}\binits{P.}} \AND
  \bauthor{\bsnm{Carter},~\bfnm{Patrick~A}\binits{P.~A.}}
(\byear{2017}).
\btitle{Permutation tests for the equality of covariance operators of
  functional data with applications to evolutionary biology}.
\bjournal{Electronic Journal of Statistics}
\bvolume{11}
\bpages{3815--3840}.
\end{barticle}
\endbibitem

\bibitem[\protect\citeauthoryear{Cuesta-Albertos and
  Febrero-Bande}{2010}]{cuesta2010simple}
\begin{barticle}[author]
\bauthor{\bsnm{Cuesta-Albertos},~\bfnm{JA}\binits{J.}} \AND
  \bauthor{\bsnm{Febrero-Bande},~\bfnm{M}\binits{M.}}
(\byear{2010}).
\btitle{A simple multiway ANOVA for functional data}.
\bjournal{Test}
\bvolume{19}
\bpages{537--557}.
\end{barticle}
\endbibitem

\bibitem[\protect\citeauthoryear{Da~Prato and Zabczyk}{2014}]{da2014stochastic}
\begin{bbook}[author]
\bauthor{\bsnm{Da~Prato},~\bfnm{Giuseppe}\binits{G.}} \AND
  \bauthor{\bsnm{Zabczyk},~\bfnm{Jerzy}\binits{J.}}
(\byear{2014}).
\btitle{Stochastic equations in infinite dimensions}.
\bpublisher{Cambridge university press}.
\end{bbook}
\endbibitem

\bibitem[\protect\citeauthoryear{Delaigle and
  Hall}{2012}]{delaigle2012achieving}
\begin{barticle}[author]
\bauthor{\bsnm{Delaigle},~\bfnm{Aurore}\binits{A.}} \AND
  \bauthor{\bsnm{Hall},~\bfnm{Peter}\binits{P.}}
(\byear{2012}).
\btitle{Achieving near perfect classification for functional data}.
\bjournal{Journal of the Royal Statistical Society: Series B (Statistical
  Methodology)}
\bvolume{74}
\bpages{267--286}.
\end{barticle}
\endbibitem

\bibitem[\protect\citeauthoryear{Douglas}{1966}]{douglas1966majorization}
\begin{barticle}[author]
\bauthor{\bsnm{Douglas},~\bfnm{Ronald~G}\binits{R.~G.}}
(\byear{1966}).
\btitle{On majorization, factorization, and range inclusion of operators on
  Hilbert space}.
\bjournal{Proceedings of the American Mathematical Society}
\bvolume{17}
\bpages{413--415}.
\end{barticle}
\endbibitem

\bibitem[\protect\citeauthoryear{Dryden, Koloydenko and
  Zhou}{2009}]{dryden2009non}
\begin{barticle}[author]
\bauthor{\bsnm{Dryden},~\bfnm{Ian~L}\binits{I.~L.}},
  \bauthor{\bsnm{Koloydenko},~\bfnm{Alexey}\binits{A.}} \AND
  \bauthor{\bsnm{Zhou},~\bfnm{Diwei}\binits{D.}}
(\byear{2009}).
\btitle{Non-Euclidean statistics for covariance matrices, with applications to
  diffusion tensor imaging}.
\bjournal{The Annals of Applied Statistics}
\bpages{1102--1123}.
\end{barticle}
\endbibitem

\bibitem[\protect\citeauthoryear{Feldman}{1958}]{feldman1958equivalence}
\begin{barticle}[author]
\bauthor{\bsnm{Feldman},~\bfnm{Jacob}\binits{J.}}
(\byear{1958}).
\btitle{Equivalence and perpendicularity of Gaussian processes.}
\bjournal{Pacific Journal of Mathematics}
\bvolume{8}
\bpages{699--708}.
\end{barticle}
\endbibitem

\bibitem[\protect\citeauthoryear{Fernique}{1970}]{fernique1970integrabilite}
\begin{barticle}[author]
\bauthor{\bsnm{Fernique},~\bfnm{Xavier}\binits{X.}}
(\byear{1970}).
\btitle{Int{\'e}grabilit{\'e} des vecteurs gaussiens}.
\bjournal{CR Acad. Sci. Paris Serie A}
\bvolume{270}
\bpages{1698--1699}.
\end{barticle}
\endbibitem

\bibitem[\protect\citeauthoryear{Ferraty and Vieu}{2006}]{ferraty:2006}
\begin{bbook}[author]
\bauthor{\bsnm{Ferraty},~\bfnm{F.}\binits{F.}} \AND
  \bauthor{\bsnm{Vieu},~\bfnm{P.}\binits{P.}}
(\byear{2006}).
\btitle{Nonparametric Functional Data Analysis: Theory and Practice}.
\bpublisher{Springer}.
\end{bbook}
\endbibitem

\bibitem[\protect\citeauthoryear{Fremdt et~al.}{2013}]{fremdt2013testing}
\begin{barticle}[author]
\bauthor{\bsnm{Fremdt},~\bfnm{Stefan}\binits{S.}},
  \bauthor{\bsnm{Steinebach},~\bfnm{Josef~G}\binits{J.~G.}},
  \bauthor{\bsnm{Horv{\'a}th},~\bfnm{Lajos}\binits{L.}} \AND
  \bauthor{\bsnm{Kokoszka},~\bfnm{Piotr}\binits{P.}}
(\byear{2013}).
\btitle{Testing the equality of covariance operators in functional samples}.
\bjournal{Scandinavian Journal of Statistics}
\bvolume{40}
\bpages{138--152}.
\end{barticle}
\endbibitem

\bibitem[\protect\citeauthoryear{Giri}{1968}]{giri1968tests}
\begin{barticle}[author]
\bauthor{\bsnm{Giri},~\bfnm{N}\binits{N.}}
(\byear{1968}).
\btitle{On tests of the equality of two covariance matrices}.
\bjournal{The Annals of Mathematical Statistics}
\bvolume{39}
\bpages{275--277}.
\end{barticle}
\endbibitem

\bibitem[\protect\citeauthoryear{G{\'o}recki and
  Smaga}{2015}]{gorecki2015comparison}
\begin{barticle}[author]
\bauthor{\bsnm{G{\'o}recki},~\bfnm{Tomasz}\binits{T.}} \AND
  \bauthor{\bsnm{Smaga},~\bfnm{{\L}ukasz}\binits{{\L}.}}
(\byear{2015}).
\btitle{A comparison of tests for the one-way ANOVA problem for functional
  data}.
\bjournal{Computational Statistics}
\bvolume{30}
\bpages{987--1010}.
\end{barticle}
\endbibitem

\bibitem[\protect\citeauthoryear{H{\'a}jek}{1958}]{hajek1958property}
\begin{barticle}[author]
\bauthor{\bsnm{H{\'a}jek},~\bfnm{Jaroslav}\binits{J.}}
(\byear{1958}).
\btitle{A property of $ J $-divergences of marginal probability distributions}.
\bjournal{Czechoslovak Mathematical Journal}
\bvolume{8}
\bpages{460--463}.
\end{barticle}
\endbibitem

\bibitem[\protect\citeauthoryear{Hastie, Buja and
  Tibshirani}{1995}]{hastie1995penalized}
\begin{barticle}[author]
\bauthor{\bsnm{Hastie},~\bfnm{Trevor}\binits{T.}},
  \bauthor{\bsnm{Buja},~\bfnm{Andreas}\binits{A.}} \AND
  \bauthor{\bsnm{Tibshirani},~\bfnm{Robert}\binits{R.}}
(\byear{1995}).
\btitle{Penalized discriminant analysis}.
\bjournal{The Annals of Statistics}
\bvolume{23}
\bpages{73--102}.
\end{barticle}
\endbibitem

\bibitem[\protect\citeauthoryear{Hl{\'a}vka, Hlubinka and
  Ko{\v{n}}asov{\'a}}{2022}]{hlavka2022functional}
\begin{barticle}[author]
\bauthor{\bsnm{Hl{\'a}vka},~\bfnm{Zden{\v{e}}k}\binits{Z.}},
  \bauthor{\bsnm{Hlubinka},~\bfnm{Daniel}\binits{D.}} \AND
  \bauthor{\bsnm{Ko{\v{n}}asov{\'a}},~\bfnm{Kate{\v{r}}ina}\binits{K.}}
(\byear{2022}).
\btitle{Functional ANOVA based on empirical characteristic functionals}.
\bjournal{Journal of Multivariate Analysis}
\bvolume{189}
\bpages{104878}.
\end{barticle}
\endbibitem

\bibitem[\protect\citeauthoryear{Jones and Bayley}{1941}]{jones1941berkeley}
\begin{barticle}[author]
\bauthor{\bsnm{Jones},~\bfnm{Harold~E}\binits{H.~E.}} \AND
  \bauthor{\bsnm{Bayley},~\bfnm{Nancy}\binits{N.}}
(\byear{1941}).
\btitle{The Berkeley growth study}.
\bjournal{Child development}
\bpages{167--173}.
\end{barticle}
\endbibitem

\bibitem[\protect\citeauthoryear{Kashlak, Aston and
  Nickl}{2019}]{kashlak2019inference}
\begin{barticle}[author]
\bauthor{\bsnm{Kashlak},~\bfnm{Adam~B}\binits{A.~B.}},
  \bauthor{\bsnm{Aston},~\bfnm{John~AD}\binits{J.~A.}} \AND
  \bauthor{\bsnm{Nickl},~\bfnm{Richard}\binits{R.}}
(\byear{2019}).
\btitle{Inference on covariance operators via concentration inequalities:
  k-sample tests, classification, and clustering via Rademacher complexities}.
\bjournal{Sankhya A}
\bvolume{81}
\bpages{214--243}.
\end{barticle}
\endbibitem

\bibitem[\protect\citeauthoryear{Kiefer and
  Schwartz}{1965}]{kiefer1965admissible}
\begin{barticle}[author]
\bauthor{\bsnm{Kiefer},~\bfnm{J}\binits{J.}} \AND
  \bauthor{\bsnm{Schwartz},~\bfnm{R}\binits{R.}}
(\byear{1965}).
\btitle{Admissible Bayes character of T2-, R2-, and other fully invariant tests
  for classical multivariate normal problems}.
\bjournal{The Annals of Mathematical Statistics}
\bpages{747--770}.
\end{barticle}
\endbibitem

\bibitem[\protect\citeauthoryear{Kraus and
  Panaretos}{2012}]{kraus2012dispersion}
\begin{barticle}[author]
\bauthor{\bsnm{Kraus},~\bfnm{David}\binits{D.}} \AND
  \bauthor{\bsnm{Panaretos},~\bfnm{Victor~M}\binits{V.~M.}}
(\byear{2012}).
\btitle{Dispersion operators and resistant second-order functional data
  analysis}.
\bjournal{Biometrika}
\bvolume{99}
\bpages{813--832}.
\end{barticle}
\endbibitem

\bibitem[\protect\citeauthoryear{Masarotto, Panaretos and
  Zemel}{2018}]{masarotto2018procrustes}
\begin{barticle}[author]
\bauthor{\bsnm{Masarotto},~\bfnm{Valentina}\binits{V.}},
  \bauthor{\bsnm{Panaretos},~\bfnm{Victor~M}\binits{V.~M.}} \AND
  \bauthor{\bsnm{Zemel},~\bfnm{Yoav}\binits{Y.}}
(\byear{2018}).
\btitle{Procrustes Metrics on Covariance Operators and Optimal Transportation
  of Gaussian Processes}.
\bjournal{Sankhya A}
\bpages{1--42}.
\end{barticle}
\endbibitem

\bibitem[\protect\citeauthoryear{Oca{\~n}a, Aguilera and
  Valderrama}{1999}]{ocana1999functional}
\begin{barticle}[author]
\bauthor{\bsnm{Oca{\~n}a},~\bfnm{Francisco~Antonio}\binits{F.~A.}},
  \bauthor{\bsnm{Aguilera},~\bfnm{Ana~M}\binits{A.~M.}} \AND
  \bauthor{\bsnm{Valderrama},~\bfnm{Mariano~J}\binits{M.~J.}}
(\byear{1999}).
\btitle{Functional principal components analysis by choice of norm}.
\bjournal{Journal of Multivariate Analysis}
\bvolume{71}
\bpages{262--276}.
\end{barticle}
\endbibitem

\bibitem[\protect\citeauthoryear{Panaretos}{2018}]{panaretos2018discussion}
\begin{barticle}[author]
\bauthor{\bsnm{Panaretos},~\bfnm{Victor~M}\binits{V.~M.}}
(\byear{2018}).
\btitle{Discussion of The statistical analysis of acoustic phonetic data:
  exploring differences between spoken Romance languages}.
\end{barticle}
\endbibitem

\bibitem[\protect\citeauthoryear{Panaretos, Kraus and
  Maddocks}{2010}]{panaretos2010second}
\begin{barticle}[author]
\bauthor{\bsnm{Panaretos},~\bfnm{Victor~M}\binits{V.~M.}},
  \bauthor{\bsnm{Kraus},~\bfnm{David}\binits{D.}} \AND
  \bauthor{\bsnm{Maddocks},~\bfnm{John~H}\binits{J.~H.}}
(\byear{2010}).
\btitle{Second-order comparison of Gaussian random functions and the geometry
  of DNA minicircles}.
\bjournal{Journal of the American Statistical Association}
\bvolume{105}
\bpages{670--682}.
\end{barticle}
\endbibitem

\bibitem[\protect\citeauthoryear{Paparoditis and
  Sapatinas}{2016}]{paparoditis2016bootstrap}
\begin{barticle}[author]
\bauthor{\bsnm{Paparoditis},~\bfnm{Efstathios}\binits{E.}} \AND
  \bauthor{\bsnm{Sapatinas},~\bfnm{Theofanis}\binits{T.}}
(\byear{2016}).
\btitle{Bootstrap-based testing of equality of mean functions or equality of
  covariance operators for functional data}.
\bjournal{Biometrika}
\bvolume{103}
\bpages{727--733}.
\end{barticle}
\endbibitem

\bibitem[\protect\citeauthoryear{Pesarin and
  Salmaso}{2010}]{pesarin2010permutation}
\begin{barticle}[author]
\bauthor{\bsnm{Pesarin},~\bfnm{Fortunato}\binits{F.}} \AND
  \bauthor{\bsnm{Salmaso},~\bfnm{Luigi}\binits{L.}}
(\byear{2010}).
\btitle{The permutation testing approach: a review}.
\bjournal{Statistica}
\bvolume{70}
\bpages{481--509}.
\end{barticle}
\endbibitem

\bibitem[\protect\citeauthoryear{Pigoli et~al.}{2014}]{pigoli2014distances}
\begin{barticle}[author]
\bauthor{\bsnm{Pigoli},~\bfnm{Davide}\binits{D.}},
  \bauthor{\bsnm{Aston},~\bfnm{John~AD}\binits{J.~A.}},
  \bauthor{\bsnm{Dryden},~\bfnm{Ian~L}\binits{I.~L.}} \AND
  \bauthor{\bsnm{Secchi},~\bfnm{Piercesare}\binits{P.}}
(\byear{2014}).
\btitle{Distances and inference for covariance operators}.
\bjournal{Biometrika}
\bvolume{101}
\bpages{409--422}.
\end{barticle}
\endbibitem

\bibitem[\protect\citeauthoryear{Pigoli et~al.}{2018}]{pigoli2018statistical}
\begin{barticle}[author]
\bauthor{\bsnm{Pigoli},~\bfnm{Davide}\binits{D.}},
  \bauthor{\bsnm{Hadjipantelis},~\bfnm{Pantelis~Z}\binits{P.~Z.}},
  \bauthor{\bsnm{Coleman},~\bfnm{John~S}\binits{J.~S.}} \AND
  \bauthor{\bsnm{Aston},~\bfnm{John~AD}\binits{J.~A.}}
(\byear{2018}).
\btitle{The statistical analysis of acoustic phonetic data: exploring
  differences between spoken Romance languages}.
\bjournal{Journal of the Royal Statistical Society: Series C (Applied
  Statistics)}
\bvolume{67}
\bpages{1103--1145}.
\end{barticle}
\endbibitem

\bibitem[\protect\citeauthoryear{Ramsay and
  Silverman}{2005a}]{ramsay-silverman-2005}
\begin{bbook}[author]
\bauthor{\bsnm{Ramsay},~\bfnm{J.}\binits{J.}} \AND
  \bauthor{\bsnm{Silverman},~\bfnm{B.~W.}\binits{B.~W.}}
(\byear{2005}a).
\btitle{Functional Data Analysis}.
\bpublisher{Springer}, \baddress{New York}.
\end{bbook}
\endbibitem

\bibitem[\protect\citeauthoryear{Ramsay and
  Silverman}{2005b}]{ramsay2005springer}
\begin{bmisc}[author]
\bauthor{\bsnm{Ramsay},~\bfnm{JO}\binits{J.}} \AND
  \bauthor{\bsnm{Silverman},~\bfnm{BW}\binits{B.}}
(\byear{2005}b).
\btitle{Springer series in statistics}.
\end{bmisc}
\endbibitem

\bibitem[\protect\citeauthoryear{Ramsay et~al.}{2021}]{ramsay2021package}
\begin{bmisc}[author]
\bauthor{\bsnm{Ramsay},~\bfnm{JO}\binits{J.}},
  \bauthor{\bsnm{Wickham},~\bfnm{Hadley}\binits{H.}},
  \bauthor{\bsnm{Ramsay},~\bfnm{Maintainer~JO}\binits{M.~J.}} \AND
  \bauthor{\bsnm{deSolve},~\bfnm{Suggests}\binits{S.}}
(\byear{2021}).
\btitle{Package ‘fda’}.
\end{bmisc}
\endbibitem

\bibitem[\protect\citeauthoryear{Rao and
  Varadarajan}{1963}]{rao1963discrimination}
\begin{barticle}[author]
\bauthor{\bsnm{Rao},~\bfnm{C~Radhakrishna}\binits{C.~R.}} \AND
  \bauthor{\bsnm{Varadarajan},~\bfnm{VS}\binits{V.}}
(\byear{1963}).
\btitle{Discrimination of Gaussian processes}.
\bjournal{Sankhy{\=a}: The Indian Journal of Statistics, Series A}
\bpages{303--330}.
\end{barticle}
\endbibitem

\bibitem[\protect\citeauthoryear{Roy}{1953}]{roy1953heuristic}
\begin{barticle}[author]
\bauthor{\bsnm{Roy},~\bfnm{Samarendra~Nath}\binits{S.~N.}}
(\byear{1953}).
\btitle{On a heuristic method of test construction and its use in multivariate
  analysis}.
\bjournal{The Annals of Mathematical Statistics}
\bpages{220--238}.
\end{barticle}
\endbibitem

\bibitem[\protect\citeauthoryear{Silverman}{1996}]{silverman1996smoothed}
\begin{barticle}[author]
\bauthor{\bsnm{Silverman},~\bfnm{Bernard~W}\binits{B.~W.}}
(\byear{1996}).
\btitle{Smoothed functional principal components analysis by choice of norm}.
\bjournal{The Annals of Statistics}
\bvolume{24}
\bpages{1--24}.
\end{barticle}
\endbibitem

\bibitem[\protect\citeauthoryear{Tavakoli and Panaretos}{2016}]{tavakoli:2014}
\begin{barticle}[author]
\bauthor{\bsnm{Tavakoli},~\bfnm{Shahin}\binits{S.}} \AND
  \bauthor{\bsnm{Panaretos},~\bfnm{Victor~M}\binits{V.~M.}}
(\byear{2016}).
\btitle{Detecting and localizing differences in functional time series
  dynamics: a case study in molecular biophysics}.
\bjournal{Journal of the American Statistical Association}
\bvolume{111}
\bpages{1020--1035}.
\end{barticle}
\endbibitem

\bibitem[\protect\citeauthoryear{Yao, M{\"u}ller and
  Wang}{2005}]{yao-etal-2005a}
\begin{barticle}[author]
\bauthor{\bsnm{Yao},~\bfnm{Fang}\binits{F.}},
  \bauthor{\bsnm{M{\"u}ller},~\bfnm{Hans-Georg}\binits{H.-G.}} \AND
  \bauthor{\bsnm{Wang},~\bfnm{Jane-Ling}\binits{J.-L.}}
(\byear{2005}).
\btitle{Functional data analysis for sparse longitudinal data}.
\bjournal{J. Amer. Statist. Assoc.}
\bvolume{100}
\bpages{577--590}.
\end{barticle}
\endbibitem

\bibitem[\protect\citeauthoryear{Zemel and Panaretos}{2019}]{zemel2019frechet}
\begin{barticle}[author]
\bauthor{\bsnm{Zemel},~\bfnm{Yoav}\binits{Y.}} \AND
  \bauthor{\bsnm{Panaretos},~\bfnm{Victor~M}\binits{V.~M.}}
(\byear{2019}).
\btitle{Fr{\'e}chet means and Procrustes analysis in Wasserstein space}.
\bjournal{Bernoulli}
\bvolume{25}
\bpages{932--976}.
\end{barticle}
\endbibitem

\bibitem[\protect\citeauthoryear{Zhang}{2013}]{zhang2013analysis}
\begin{bbook}[author]
\bauthor{\bsnm{Zhang},~\bfnm{Jin-Ting}\binits{J.-T.}}
(\byear{2013}).
\btitle{Analysis of variance for functional data}.
\bpublisher{Chapman and Hall/CRC}.
\end{bbook}
\endbibitem

\end{thebibliography}

\end{document}